\newcommand{\ie}{{\em i.e., }}
\newcommand{\eg}{{\em e.g., }}
\newcommand {\beq} {\begin{equation}}
\newcommand {\eeq} {\end{equation}}
\newcommand {\bear} {\begin{eqnarray}}
\newcommand {\eear} {\end{eqnarray}}
\newcommand {\barr} {\begin{array}}
\newcommand {\earr} {\end{array}}
\newcommand{\Eqref}[1]{Eq.~(\ref{#1})}
\newcommand{\Eqrefs}[2]{Eqs.~(\ref{#1}) and~(\ref{#2})}
\newcommand{\fig}[1]{Fig.~\ref{#1}}
\newenvironment{itemize*}%
  {\begin{itemize}%
    \setlength{\itemsep}{0pt}%
    \setlength{\parskip}{0pt}}%
  {\end{itemize}}
\newenvironment{enumerate*}%
  {\begin{enumerate}%
    \setlength{\itemsep}{0pt}%
    \setlength{\parskip}{0pt}}%
  {\end{enumerate}}
\begin{document}

\markboth{R.~Gu\'erin et al.}{Adoption of bundled services with network externalities and correlated affinities}

\title{Adoption of bundled services with network externalities and correlated affinities}
\author{Roch Gu\'erin
\affil{Washington University in St. Louis}
Jaudelice C. de Oliveira
\affil{Drexel University}
Steven Weber
\affil{Drexel University}}

\begin{abstract}


The goal of this paper is to develop a principled understanding of when it is beneficial to bundle technologies or services whose value is heavily dependent on the size of their user base, i.e., exhibits positive exernalities.  Of interest is how the joint distribution, and in particular the correlation, of the values users assign to components of a bundle affect its odds of success.  The results offer insight and guidelines for deciding when bundling new Internet technologies or services can help improve their overall adoption.  In particular, successful outcomes appear to require a minimum level of value correlation.
\end{abstract}




\category{500}{Networks}{Network economics}
\category{300}{Networks}{Public Internet}

\keywords{Internet services, adoption, user valuation, correlation}

\acmformat{Roch Gu\'erin, Jaudelice C. de Oliveira, and Steven Weber,
  2013. Adoption of bundled services with network externalities and correlated affinities.}


\maketitle

\section{Introduction}
\label{sec:intro}
Many Internet technologies, applications, and services\footnote{For
  the sake of conciseness, we use the term services in the rest of the
  paper.} have value that increases with the size of their user base,
\ie they exhibit positive externalities or network effects.
Externalities are well-known~\cite{Cabral1990} to have
dual effects on the adoption pattern of services.  Adoption rapidly
accelerates after passing a critical threshold (until the market
starts saturating), but reaching this critical level of adoption is
often slow and difficult.  In practice, services that fail often do so
during this early stage, as many potential adopters see a cost that
exceeds the (low) initial value of the service.  This is commonly
mentioned as an explanation for the limited or stalled adoption
of many Internet security protocols~\cite{OzmSch06}.

A common approach (see again~\cite{OzmSch06}) to overcome this initial
hurdle is to bundle services, in the hope that the bundle has broader
appeal and is, therefore, able to overcome early adoption inertia.
The main unknown is the extent to which dependencies (as measured by
a joint distribution) or \emph{correlation} in how
users value the individual services influence their adoption decision
for the bundle.
We illustrate this next by way of examples, which also (and
more importantly) demonstrate the diversity of Internet services
for which this arises.


\subsection{Anonymous communications and secure distributed storage}  

Anonymous communication systems have been available for some time, \eg
see~\cite{franco2012} for a recent survey, but in spite of a recent
rise in both profile~\cite{arthur2013} and usage~\cite{brewster2013},
they remain relatively marginal, \ie have not yet attracted a large
user-base.  This can affect their robustness and their ability to
deliver strong anonymity guarantees (mixing traffic from more users
and tapping into the resources contributed by those users can improve
both anonymity and robustness, at least in P2P based implementations
of such systems).

Overcoming the limited appeal (to users) of anonymous communications
and increasing the number of users such a system can tap into, can be
realized by bundling it with another service.  Ideally, this other
service should exhibit technical synergies with anonymous
communications so as to facilitate a joint implementation.  Secure
distributed storage is a possible candidate.  It enables the automatic
and encrypted backup of local files over a distributed set of network
peers (see BuddyBackup\footnote{http://www.buddybackup.com/.} for an
example), and shares with anonymous communications a reliance on
cryptographic primitives and protocols, as well as a value that grows
with its number of adopters (more users likely means a more reliable
system).  The main question is whether combining those two services
can increase adoption for both.  The answer depends on the cost versus
value of the bundle, and how this varies across users.

The cost of the bundle consists of the communication (bandwidth),
processing, and storage costs of the two services, with anonymous
communications calling mostly for bandwidth and processing resources,
and secure distributed storage requiring primarily storage resources
and to a lesser extent processing and communication resources.
Because the two services have mostly independent needs, those costs
should be approximately additive.  The value a user assigns to the
bundle depends on her level of usage of anonymous communications and
reliance on secure distributed storage as a means of preserving and
accessing her personal data.  This value will change as more users
adopt the bundle (it improves the quality and reliability of both
services), but the decisions of early adopters depend primarily on
how they intrinsically value access to anonymous communication and
secure distributed storage.

For illustration purposes, assume that within a given user population
the stand-alone values of both services are uniformly distributed.
However, to reflect the fact that secure distributed storage should be
attractive to most users while anonymous communications will likely
have more limited appeal, we assume that the stand-alone values of the
former are in $[a, 1], 0<a<1,$ while they span the full $[0,1]$ range
for the latter.  In other words, most users view secure distributed
storage as useful (valued at $\geq a$), while fewer assign a similar
value (in the range $[a,1]$) to anonymous communications.  Under those
assumptions, correlation in user valuations clearly affects the number
of early adopters the service bundle will attract.  For example, it is
relatively easy to show (see Section~\ref{sec:conaff} for related
derivation details) that the cost threshold beyond which there are no
early adopters for the bundle is~$2$ under perfect positive
correlation, but only $a+1$ under perfect negative correlation.

\subsection{Online discussion forums}

Consider next the case of an online
discussion forum\footnote{Similar arguments hold for other systems of
  a similar ``crowdsourcing'' nature, \eg recommender systems.} dedicated
to a particular topic.  Participating in such a forum has some
intrinsic value, \eg from access to promotions and discounts on
related products, but its core value often comes from the answers and
advice it provides in response to users' questions.  To succeed, a
forum must, therefore, accumulate a large enough ``knowledge-base''
and consequently achieve a critical mass of users.  This can be
challenging, as the added value from Q\&A's is essentially absent in
the early stages, and promotions and discounts alone may be
insufficient to attract enough early adopters.  Combining the topics
of multiple forums under a common umbrella is one way to address this
challenge.  The stand-alone value of such a ``bundled'' forum, \eg
promotions and discounts that now extend across more products, may
appeal to a broader user base, and allow it to succeed where
individual forums would not have.  The question we seek to
answer is again when and why this may be the case?

As with anonymous communications and secure distributed storage,
whether a bundled forum attracts more early adopters, and therefore
improves its odds of success, depends on its initial cost-benefit
ratio relative to that of individual forums.  The ``cost'' of joining
a bundled forum, \eg the amount of time needed to extract useful
information, can be higher than that of more focused,
single-topic forums.  Its combined stand-alone value arguably depends
on many factors, but a reasonable first approximation is again to
assume that it is the sum of the stand-alone values (product
promotions and discounts) associated with both topics.  As in the
previous example, whether this sum exceeds the cost of joining the
forum, which determines the number of early adopters, depends to a
large extent on the \emph{joint distribution} of user valuations for
the individual forums; an important measure of which is their
correlation.

For purpose of illustration, consider a scenario where we contemplate
merging two discussion forums, whose stand-alone values follow
identical uniform distributions when measured across a population of
users (they are of equal value on average).  Assume further that for a
given user, the values she sees in the two forums are either perfectly
positively or perfectly negatively correlated, \ie equal or
diametrically opposed.  Under perfect positive correlation, the
stand-alone value that any user derives from the bundled forum is then
simply \emph{twice} the value she sees in either individual
forum.  If we assume that the cost of joining the bundled forum is
also twice that of joining a single-topic forum, \eg it takes twice as
long to find relevant information, then it is easy to show that
bundling has no impact on early adoption, and the bundled forum sees
the same number of potential early adopters\footnote{Note though that
  final adoption can be different depending on the strength of the
  externality factor of the combined topics.} as either 
original forum.  In contrast, when values are perfectly negatively
correlated, all users now see the \emph{same} (average) value from
joining the bundled forum.  In this case either no user or all users
will be early adopters, depending on whether this average value is
above or below the bundle's cost.  Hence, unlike the case of perfect
positive correlation, bundling can significantly affect
the number of early adopters.

\subsection{Summary}
As the above examples hopefully illustrated, correlation in how users
value different services and/or technologies (and more generally their
joint distribution) can have a significant effect on whether combining
them in a bundle is beneficial.  In the rest of this paper, we explore
this issue in a systematic fashion.  Section~\ref{sec:related} offers
a brief review of related works.  Section~\ref{sec:matmod} introduces
our model for service adoption.  Section~\ref{sec:conaff} considers
the case where user affinities for the services are represented as
continuous uniform random variables.  It first explores the extreme
cases of perfect positive and negative correlation in
Section~\ref{ssec:perposcor} and Section~\ref{ssec:pernegcor},
respectively, while Section~\ref{ssec:uncor} investigates the
intermediate case of independent affinities.  The latter section
illustrates the difficulty of characterizing the bundle's adoption
under general correlation, and motivates the model of
Section~\ref{sec:disaff}
where user affinities for the services are captured through correlated
discrete (Bernoulli) random variables with parameterized correlation.
Section~\ref{sec:guidelines} articulates the findings that emerge from
the analysis, and numerically explores their robustness through
limited extensions to the model. Section~\ref{sec:con} offers a brief
conclusion.

\section{Related work}
\label{sec:related}

The topic of this paper is at the intersection of two major lines of
work; product and technology diffusion, and product and service
bundling.

Modeling how products and services diffuse through a population of
potential users, \ie are being adopted, is a topic of longstanding
interest in marketing research with \cite{PerMul2010} offering a
recent review of available models and techniques.  The models most
relevant to our investigation are those based on the approach
introduced in~\cite{Cabral1990} and extended in many subsequent works,
which explore product diffusion in the presence of externalities using
an adoption decision process that reflects the utility of individual
users.  As described in Section~\ref{sec:matmod}, the adoption model we
rely on belongs to this line of work.  However, and except for a few
recent works that we review below, the aspect of bundling had not been
incorporated in those investigations, and this is one of the aspects
we focus on in the paper.

There has obviously been a significant literature devoted to bundling
as a stand-alone topic (see~\cite{VenMah2009} for a recent review).
The main goal of most of those works has typically been the
development of optimal bundling and pricing strategies, and pricing is
a dimension that is largely absent from our investigation in that
service costs\footnote{The time needed to retrieve information in a
  discussion forum, or the communication, processing and storage
  resources that either anonymous communications or secure distributed
  storage require.} are assumed given and exogenous.  Instead, our
focus is mostly on how the joint distribution in service valuation
across users (as measured through their correlation coefficient),
determines whether the adoption level of a service bundle can exceed
those of separate service offerings.  Correlation in how users value
different services and the impact this has on bundling strategies is
in itself a topic that several prior works have explicitly taken into
account,
\eg\cite{Schmalensee1984,McAfeeMcMillanWhinston1989,BakosBrynjolfsson1999}.
In general, negative correlation in demand improves bundling's
benefits over separate offerings, although high marginal costs
(compared to the average value of the bundle) can negate this effect.
Conversely, a high positive correlation tends to yield the opposite
outcome, \ie favor separate (pure component) offerings.  As
highlighted in the examples of Section~\ref{sec:intro}, our focus on
maximizing adoption results in more nuanced outcomes, with correlation
playing a more ambivalent role in determining the success of a bundled
offering.  Furthermore, early works on bundling did not account for
the potential impact of externalities.

There are to-date only three works we are aware
of~\cite{PraVen2010,PanEtz2012,ChaoDerden2013} that have investigated
the problem of bundling products or services with externalities, and
we briefly review how these papers differ from our investigation.
First and as has been the norm in the bundling literature, those three
papers all focus on optimal pricing strategies, while we assume that
costs (prices) are exogenous parameters and instead seek to understand
their impact (and that of other factors) on the adoption level of a
bundled offering.  Second, the impact of value (demand) correlation is
essentially absent from those three prior works.  

Specifically, \cite{PraVen2010} focuses on optimal pricing while
assuming \emph{independent} valuations for the two services.
\cite{PanEtz2012} explores the joint offering of a product and a
complementary service, where the latter exhibits positive
externalities.  As in~\cite{PraVen2010}, users' valuations for the
product and its complementary service are assumed independent, and
there is no investigation of the potential impact of their
correlation.  \cite{ChaoDerden2013} is cast in the context of a
two-sided market (the two market sides create externalities), where
the platform provider seeks to decide how to bundle and price new
content with the platform it offers, given the existence of an
installed based of users and content developers.  The focus is again
on optimal pricing strategies and bundling decisions, and there is no
correlation between the value of the new content and that of earlier
content.  These are again important differences with our work, which
furthermore does not involve the presence of an existing user base.
As~\cite{ChaoDerden2013}'s title indicates and as the paper emphasizes
(Section~2.2), this plays an important role in the platform's
strategic decisions.  In contrast, our interest lies primarily in
understanding how bundling can help nascent Internet services
ultimately reach the high levels of adoption they need to realize
their full value potential and succeed.

\section{Mathematical model}
\label{sec:matmod}
In this section, we review the basic structure of the models on which
we rely to capture the evolution of service adoption.
\subsection{Separate (unbundled) service offerings}
\label{ssec:sepseroff}

We consider a model for the adoption of multiple (two) services by a
heterogeneous population of potential users.  The perceived utility
$V_i(x_i(t))$ of service $i\in\{1,2\}$ by a (random) user given that a
fraction $x_i(t) \in [0,1]$ of the population has adopted the service
at time $t$ incorporates three components: $i)$ the user's affinity
(stand-alone value) for the service (capturing users' heterogeneity),
$ii)$ the network externality tied to the adoption level of the
service, and $iii)$ the service cost.  Specifically: 
\begin{equation}
\label{eq:a}
V_i(x_i(t)) = U_i + e_i x_i(t) - c_i, ~ i \in \{1,2\},
\end{equation}
where $i)$ $U_i \geq 0$ is the user's (random) affinity for service
$i$; $ii)$ $e_i \geq 0$ is the strength of the externality
contribution for service $i$; and $iii)$ $c_i \geq 0$ is the cost of
adopting service $i$. As is common, for analytical tractability we
adopt a linear externality model\footnote{The assumption of linear
  externality typically does not affect the nature of the findings,
  \ie they hold qualitatively for distributions with CDF $F(u)$, which
  share with the uniform distribution a non-decreasing hazard-rate
  function
  $F^{\prime}(u)/\left(1-F(u)\right)$~\cite{Bhargava,Fudenberg}.}.

When services are offered separately, users make adoption decisions
for each based on their respective utilities: 
\begin{equation*}
\mbox{user adopts service $i$ at time $t$ with adoption level
  $x_i(t)$} ~ \Leftrightarrow ~ V_i(x_i(t)) > 0.  
\end{equation*}
In particular, there is no ``budget constraint'' where adoption of one
service by a user affects adoption of the other service by that user;
this is natural given our focus on adoption costs, \eg communication,
storage, processing, etc., rather than pricing.  However, while
service adoption decisions are uncoupled, the random variables
$(U_1,U_2)$, capturing heterogeneity in user affinity, may be
correlated. 

Denote as $h_i(x_i) = \Pbb(V_i(x_i) > 0)$ the probability a random
user adopts service $i$ given an adoption level $x_i$. 
An equilibrium for this model is any $x_i^*$ such that
\begin{equation}
\label{eq:equil_s}
h_i(x_i^*) = \Pbb(U_i > c_i - e_i x_i^*) = x_i^*.
\end{equation}
When the two services are offered separately they achieve adoption
equilibria $(x_1^*,x_2^*) \in [0,1]\times [0,1]$.  One of our goals is
to compare these equilibria to those realized when the two services
are bundled, and characterize differences as a function of the model's
parameters $(e_1,e_2,c_1,c_2)$ and the joint distribution of
$(U_1,U_2)$.  

\subsection{Bundled service offerings}
\label{ssec:bunseroff}

Under bundling, a user must decide whether to adopt either both
services or neither, \ie we do not consider the case of mixed
bundling where services are simultaneously offered as a bundle and
separately. The basis for a user's decision is now the aggregate
utility she derives from the bundle:  
\begin{equation*}
\mbox{user adopts the bundle at time $t$} ~ \Leftrightarrow ~ V(x(t)) > 0,
\end{equation*}
where consistent with~\Eqref{eq:a}
\begin{equation}
\label{eq:b}
V(x(t)) = V_1(x(t)) + V_2(x(t)) = U + e x(t) - c.
\end{equation}
Here, $x(t)$ is the (common) adoption level of the bundled services.
Note the assumption of additive values for the two services, \ie
$V(x(t)) = V_1(x(t)) + V_2(x(t))$, which implies that they are neither
substitute nor complement.  Under this assumption, $U = U_1 + U_2$ is
the aggregate intrinsic value of the bundled services, $e = e_1+e_2$
is the aggregate force of the externality, and $c$ is the aggregate
cost, which for simplicity also satisfies $c=c_1+c_2$.  Extending the
model to account for instances where the two services are partial
complements or substitutes, as well as for possible economies of scope
in the cost of a service bundle is clearly of interest.  Such
extensions can be readily incorporated in the models, but add
complexity.  Furthermore, while they quantitatively affect adoption
outcomes, \ie if and when bundling is beneficial, the qualitative
findings of the paper still hold, and in particular the importance of
the joint distribution (correlation) of service affinities in the
efficacy of bundling.


As with separate offerings, equilibria satisfy 
\begin{equation}
\label{eq:BunCCDF}
h(x^*) = \Pbb(U > c - e x^*) = x^*.
\end{equation}
Our question can now be restated more formally: {\em how do adoption
  equilibria $(x^*,x^*)$ under bundling compare with adoption
  equilibria $(x_1^*,x_2^*)$ when services are offered separately?} 

It remains to specify the joint distribution of service
affinities (values a user assigns to each service) $(U_1,U_2)$.  In
Section~\ref{sec:conaff} we consider the case 
where $(U_1,U_2)$ are uniform continuous random variables, while in
Section~\ref{sec:disaff} we assume that $(U_1,U_2)$ are uniform discrete
(Bernoulli) random variables.  Table~\ref{tab:notation} lists commonly
used notation.    

\begin{table}
\tbl{Notation\label{tab:notation}}{
\begin{tabular}{rl} \hline 
$x_i(t)$ & adoption level of (unbundled) service $i$ at time $t$ \\
$x(t)$ & adoption level of service bundle at time $t$ \\
$x_i^*,x^*$ & equilibria adoption level for service $i$ and bundle \\
$(U_1,U_2)$ & random (unbundled) service affinities \\
$U = U_1+U_2$ & affinity for service bundle \\
$(e_1,e_2)$ & externality for (unbundled) services $1,2$ \\
$e = e_1 + e_2$ & externality for service bundle \\
$(c_1,c_2)$ & costs for adopting (unbundled) services $1,2$ \\
$c = c_1 + c_2$ & costs for adopting service bundle \\
$(V_1,V_2)$ & utility function for (unbundled) services $1,2$ \\
$V = V_1+V_2$ & utility function for service bundle \\
$h_i(x_i) = \Pbb(V_i(x_i) > 0)$ & probability of adoption of (unbundled) services $1,2$ \\
$h(x) = \Pbb(V(x) > 0)$ & probability of adoption of service bundle \\
$l_i = \frac{c_i-1}{e_i}$ & left adoption threshold for (unbundled) services $1,2$ \\
$r_i = \frac{c_i}{e_i}$ & right adoption threshold for (unbundled) services $1,2$ \\
$l = \frac{c-2}{e}$ & left adoption threshold for service bundle \\
$m = \frac{c-1}{e}$ & middle adoption threshold for service bundle \\
$r = \frac{c}{e}$ & right adoption threshold for service bundle \\
$\rho$ & correlation parameter for $(U_1,U_2)$ in \eqref{eq:c} \\
$p$ & distribution parameter for $(U_1,U_2)$ in \eqref{eq:disaffjointdisbn} \\
$n$ & population size for Monte-Carlo simulations \\ \hline
\end{tabular}}
\end{table} 

\section{Continuous affinities}
\label{sec:conaff}

In this section, we assume $(U_1,U_2)$ are continuous uniform random
variables on $[0,1]$.  This assumption is sufficient to characterize
the equilibria under separate service offerings, which we do in
Section~\ref{ssec:conaffsep}.  The equilibria under bundled service
offerings, however, depend upon the correlation between $(U_1,U_2)$,
since the bundled affinity depends upon $U = U_1 + U_2$.  A
distribution on $(U_1,U_2)$ with a parameterized correlation is
presented in Section~\ref{ssec:gencor}, but is difficult to work with
in its general form\footnote{See~\cite[Section 2]{VenMah2009} for a
  related discussion on the difficulty of using uniform distributions
  to study the impact of correlation on bundling.}.  Consequently, we
investigate the three special cases where $(U_1,U_2)$ are perfectly
positively (Section~\ref{ssec:perposcor}) and negatively
(Section~\ref{ssec:pernegcor}) correlated, and independent
(Section~\ref{ssec:uncor}).  The intent is to illustrate that
different types of outcomes arise at the two correlation extremes, and
confirm the difficulties associated with capturing the impact of
intermediate correlation values when using continuous distributions.
This then motivates the more tractable discrete model of
Section~\ref{sec:disaff}, wherein we are able to more explicitly
explore the impact of varying correlation.  Numerical investigations
are then used in Section~\ref{sec:impact_rho_con} to verify that
findings obtained with the simpler discrete model also hold under the
continuous model.

\subsection{Separate offerings}
\label{ssec:conaffsep}

The following proposition characterizes the possible equilibria for
separate service offerings when the user service affinities are
uniform random variables. 

\begin{proposition}
\label{prop:conaffsepoffequ}
When $U_i$ ($i \in \{1,2\}$) is uniformly distributed on $[0,1]$, the
probability of user adoption of service $i$ in \Eqref{eq:equil_s}
becomes 
\begin{equation}
h_i(x_i) = \left\{ \begin{array}{lrcccl}
0, \; & & & x_i & \leq & l_i \\
e_i x_i + 1 - c_i, \; & l_i & < & x_i & \leq & r_i \\
1, \; & r_i & > & x_i & & 
\end{array} \right. 
\end{equation}
for adoption thresholds $l_i \equiv \frac{c_i - 1}{e_i}$ and $r_i \equiv \frac{c_i}{e_i}$.  The three possible equilibria are $x_i^* \in \{0,(1-c_i)/(1-e_i),1\}$.  The conditions for each equilibrium (eq.) are:
\begin{eqnarray*}
0 \mbox{ is stable eq.} & \Leftrightarrow & c_i > 1 \nonumber \\
\frac{1-c_i}{1-e_i} \mbox{ is stable eq.} & \Leftrightarrow & e_i < c_i < 1 \nonumber \\
\frac{1-c_i}{1-e_i} \mbox{ is unstable eq.} & \Leftrightarrow & 1 < e_i < c_i \nonumber \\
1 \mbox{ is stable eq.} & \Leftrightarrow & e_i > c_i \label{eq:conaffsepeq1}
\end{eqnarray*}
The lowest stable equilibrium (lseq.) adoption level for each $(c_i,e_i)$ is
\begin{eqnarray*}
0 \mbox{ is lseq.} & \Leftrightarrow & c_i > 1 \nonumber \\
\frac{1-c_i}{1-e_i} \mbox{ is lseq.} & \Leftrightarrow & e_i < c_i < 1 \nonumber \\
1 \mbox{ is lseq.} & \Leftrightarrow & c_i < \min\{e_i,1\} \label{eq:conaffsepeq2}
\end{eqnarray*}
\end{proposition}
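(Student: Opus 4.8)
The plan is to treat the statement in three layers---the closed form of $h_i$, the location of the fixed points of $h_i$ on $[0,1]$, and their stability under the myopic adjustment dynamics that is standard for this family of externality-driven adoption models---and then to read off the lowest stable equilibrium from the resulting picture.

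First I would obtain $h_i$ by evaluating $h_i(x_i)=\Pbb(U_i>c_i-e_ix_i)$ against the uniform law on $[0,1]$. Everything is controlled by where the threshold $c_i-e_ix_i$ falls relative to $0$ and $1$: it is at least $1$ exactly when $x_i\le l_i=(c_i-1)/e_i$, giving $h_i=0$; it is at most $0$ exactly when $x_i\ge r_i=c_i/e_i$, giving $h_i=1$; and in between the complementary CDF of the uniform yields $h_i(x_i)=1-(c_i-e_ix_i)=e_ix_i+1-c_i$. This is the three-piece form claimed, and I would record that $h_i$ is continuous and nondecreasing, flat at $0$ on $[0,l_i]$, affine with slope $e_i$ on $[l_i,r_i]$ (an interval of width $1/e_i$), and flat at $1$ on $[r_i,1]$.

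Next I would intersect $h_i$ with the diagonal. The two flat pieces furnish the candidate boundary equilibria $x_i^*=0$, which is an actual fixed point precisely when $0\le l_i$, i.e.\ $c_i\ge 1$, and $x_i^*=1$, which is a fixed point precisely when $r_i\le 1$, i.e.\ $c_i\le e_i$. On the affine piece, solving $e_ix_i+1-c_i=x_i$ gives the interior root $x_i^*=(1-c_i)/(1-e_i)$, and a short sign computation shows this root is \emph{admissible}---lies in $(l_i,r_i)\cap(0,1)$---exactly in the two parameter regimes singled out in the statement, one of which sits in $\{e_i<1\}$ and the other in $\{e_i>1\}$; outside these regimes the root leaves the unit interval and is not an adoption equilibrium. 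For stability I would invoke the t\^atonnement flow $\dot x_i=h_i(x_i)-x_i$ attached to \Eqref{eq:equil_s}, under which a fixed point is stable iff $h_i(x_i)-x_i$ changes sign from $+$ to $-$ there. On the affine piece $h_i$ is smooth with slope $e_i$, so the interior root is stable when $e_i<1$ and unstable when $e_i>1$, which matches the two admissible regimes to the stable and unstable cases of the statement. For the boundary points the derivative test is replaced by a one-sided check: near $x_i=0$ one has $h_i\equiv 0<x_i$ to the right when $l_i>0$ (stable iff $c_i>1$), and near $x_i=1$ one has $h_i\equiv 1>x_i$ to the left when $r_i<1$ (stable iff $e_i>c_i$). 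Finally, the regions $c_i>1$, $e_i<c_i<1$, and $c_i<\min\{e_i,1\}$ partition the parameter space up to the measure-zero boundaries $c_i=1$, $c_i=e_i$, and $e_i=1$; in each I would collect the stable equilibria just identified and take the smallest, which gives $0$, $(1-c_i)/(1-e_i)$, and $1$ respectively, proving the lowest-stable-equilibrium table.

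The step I expect to be the main obstacle is the bookkeeping in locating and classifying the interior equilibrium. The delicate point is to not conflate the formal root $(1-c_i)/(1-e_i)$ of the affine equation with a genuine equilibrium: it counts only when it lands inside $(l_i,r_i)\cap(0,1)$, and tracking this through the joint ordering of $e_i$, $c_i$, and $1$ is precisely what separates the stable-interior and unstable-interior regimes from the regimes in which the root escapes $[0,1]$ and the dynamics collapse onto a boundary equilibrium. The secondary care-point is applying one-sided stability at $0$ and $1$ rather than the interior slope criterion, since those fixed points sit at the ends of flat pieces of $h_i$.
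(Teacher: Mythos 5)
The paper offers no proof of Prop.~\ref{prop:conaffsepoffequ} --- it is declared straightforward and omitted --- so the only benchmark is the intended routine argument, and your proposal is exactly that argument: evaluate the uniform complementary CDF piecewise in the threshold $c_i-e_ix_i$, intersect the resulting three-piece $h_i$ with the diagonal, classify fixed points by the slope $e_i$ on the affine piece and by one-sided comparisons at the flat ends, and read off the lowest stable equilibrium region by region. All of that is sound, and your treatment of the boundary fixed points and of the lseq partition is correct.

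The one step you should not take on faith is the one you yourself flagged as the main obstacle and then waved through: the admissibility window for the interior root. Carrying out the ``short sign computation'' gives that $(1-c_i)/(1-e_i)$ lies in $(l_i,r_i]\cap(0,1)$ precisely when $e_i<c_i<1$ (stable, slope $e_i<1$) or when $1<c_i<e_i$ (unstable, slope $e_i>1$); the latter is exactly the regime in which both $0$ and $1$ are stable fixed points, so an unstable separatrix must sit between them. This is \emph{not} the condition $1<e_i<c_i$ printed in the proposition: under $1<e_i<c_i$ one has $(1-c_i)/(1-e_i)=(c_i-1)/(e_i-1)>1$, so the root leaves the unit interval, $r_i=c_i/e_i>1$ so $x_i^*=1$ is not even a fixed point, and the only equilibrium is $0$. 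The printed condition appears to be a typo (it recurs as $2<e<c$ in Cor.~\ref{cor:conaffposprobadopt}), but your proof asserts that the computation ``matches the two admissible regimes singled out in the statement,'' which it does not as written. Make the inequality chain explicit --- for $c_i,e_i>1$ one has $x_i^*=(c_i-1)/(e_i-1)>(c_i-1)/e_i=l_i$ automatically, and $x_i^*\le r_i \Leftrightarrow e_i(c_i-1)\le c_i(e_i-1)\Leftrightarrow c_i\le e_i$ --- and state the corrected condition; with that the argument is complete.
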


The proof of Prop.\ \ref{prop:conaffsepoffequ} is straightforward and
is omitted. 
Note that the equilibria conditions
are a cover but not a partition of the $(c_i,e_i)$ plane, while the
lseq.\ conditions\
are a partition of the $(c_i,e_i)$ plane (see \fig{fig:conaffsep1} in
the appendix for an illustration).
Note also that the notion of lowest stable equilibrium $(lseq.)$ is
natural is our setting, where we consider services that have an
initial adoption level of $0$ when first offered, \ie $x_i(0) = 0$, so
that the lseq. will be the achieved equilibrium.  The conditions on
the equilibria are also intuitive: zero adoption results when the
costs are high, full adoption results when the externality effect
outweighs the cost, and partial adoption results when costs are low
but outweigh the externality effect.

\subsection{Bundling under general correlation}
\label{ssec:gencor}

The equilibria under bundled service offerings with continuous uniform
affinities $(U_1,U_2)$ depend upon the correlation between them.
There are many ways to generate random variables with parametrized
correlation.  We rely on a standard approach~\cite{Pea1907,HotPab1936}
(see the Appendix, Section~\ref{sec:gen_corr_var} for details) to
generate a pair of uniform random 
variables $(U_1,U_2)$ with correlation coefficient $\rho$ for any
value $\rho\in[-1,1]$.  The approach uses a pair of independent
standard normal random variables as its starting point, so that the
joint distribution $F_{U_1,U_2}(u_1,u_2)$, the distribution of the
aggregate service affinity $F_U(u)$ for $U = U_1 + U_2$, and the
resulting probability of adoption $h(x)$, can all be described in
terms of the standard normal CDF
$F_Z$.

As seen in the Appendix, the resulting expressions are,
in general, rather unwieldy, and for illustration purposes, we restate
below the expression for $h(x)$ that can be used to determine adoption
equilibria.  
{\small
\begin{equation}
h(x) = \left\{ \begin{array}{lrcccl}
0, \; & & & x & \leq & l \\
2 - (c-ex) - \displaystyle\int_{c-ex-1}^1 F_Z \left( \frac{F_Z^{-1}(c-ex-v) - \rho F_Z^{-1}(v)}{\sqrt{1-\rho^2}} \right) \drm v , \; & l & < & x & \leq & m \\
1 - \displaystyle\int_0^{c-ex} F_Z \left( \frac{F_Z^{-1}(c-ex-v) - \rho F_Z^{-1}(v)}{\sqrt{1-\rho^2}} \right) \drm v, \; & m & < & x & \leq & r \\
1, \; & r & < & x & & 
\end{array} \right.
\label{eq:conaffbunprobadopt-main}
\end{equation}
} 
for adoption thresholds $l \equiv \frac{c-2}{e}$, $m \equiv \frac{c-1}{e}$, and $r \equiv \frac{c}{e}$.  

The equilibria under bundling are the solutions of $h(x) = x$.  As
evident from \Eqref{eq:conaffbunprobadopt-main}, this is a difficult
equation to work with. 
This motivates focusing on the three specific cases of perfect
positive ($U_1=U_2$) and negative ($U_1=1-U_2$) correlation, as well
as independence, \ie $\rho \in \{-1,0,+1\}$.

\subsection{Perfect positive correlation}
\label{ssec:perposcor}

Specializing for $\rho = 1$ Props.\ \ref{prop:jointcdfpdfcorrcont}
and~\ref{prop:conaffsumpdfcdf} of the Appendix yields the 
joint and sum distributions for uniform random variables $(U_1,U_2)$
satisfying $U_1 = U_2$, and thus $U = U_1 + U_2$ is uniform over
$[0,2]$: 
\begin{eqnarray*}
F_{U_1,U_2}(u_1,u_2) = \min\{u_1,u_2\} & & f_{U_1,U_2}(u_1,u_2) = \mathbf{1}_{u_1=u_2} \nonumber \\
F_U(u) = u/2, ~ u \in [0,2] & & 
f_U(u) = 1/2, ~ u \in [0,2]. 
\label{eq:conaffposcdfpdf}
\end{eqnarray*}
Because the aggregate affinity is uniformly distributed on $[0,2]$,
the resulting equilibria are of the same form as in
Prop.\ \ref{prop:conaffsepoffequ} after replacing $e_i$ and $c_i$ by
$e/2$ and $c/2$, respectively.  Thus we have the following corollary
to Prop.\ \ref{prop:conaffbunprobadopt} and
Prop.\ \ref{prop:conaffsepoffequ}. 
\begin{corollary}
\label{cor:conaffposprobadopt}
The probability of bundle adoption $h(x)$ in
Prop.\ \ref{prop:conaffbunprobadopt} under aggregate affinity
$U=U_1+U_2$ formed from perfectly positively correlated uniform
affinities $(U_1,U_2)$ satisfying $U_1 = U_2$ is 
\begin{equation}
h(x) = \left\{ \begin{array}{lrcccl}
0, \; & & & x & \leq & l \\
\frac{e}{2} x + 1 - \frac{c}{2}, \; & l & < & x & \leq & r \\
1, \; & r & < & x & & 
\end{array} \right. 
\end{equation}
The three possible equilibria are $x^* \in \{0,(2-c)/(2-e),1\}$.  The
conditions for each equilibrium (eq.) are: 
\begin{eqnarray*}
&& 0 \mbox{ is stable eq.} \Leftrightarrow c > 2, 
\quad  1 \mbox{ is stable eq.} \Leftrightarrow e > c \nonumber \\
&& \frac{2-c}{2-e} \mbox{ is stable eq.} \Leftrightarrow e < c < 2, \quad 
\frac{2-c}{2-e} \mbox{ is unstable eq.} \Leftrightarrow 2 < e < c
\label{eq:conaffposeq1}
\end{eqnarray*}
The lowest stable equilibrium (lseq.) adoption level for each $(c,e)$ is
\begin{equation*}
0 \mbox{ is lseq.} \Leftrightarrow c > 2 , \quad
\frac{2-c}{2-e} \mbox{ is lseq.} \Leftrightarrow  e < c < 2, \quad
1 \mbox{ is lseq.} \Leftrightarrow c < \min\{e,2\} \label{eq:conaffposeq2}
\end{equation*}
\end{corollary}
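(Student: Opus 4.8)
The plan is to obtain $h(x)$ directly and then recognize that the perfectly positively correlated bundle is, up to a rescaling, identical to a single separate offering, so that Prop.~\ref{prop:conaffsepoffequ} applies verbatim. First I would use the given fact that $U = U_1 + U_2 = 2U_1$ is uniform on $[0,2]$ to evaluate $h(x) = \Pbb(U > c - ex)$ from \Eqref{eq:BunCCDF}. Reading off the uniform CCDF on $[0,2]$ gives $h(x) = 0$ when $c - ex \geq 2$ (\ie $x \leq l$), $h(x) = 1$ when $c - ex \leq 0$ (\ie $x > r$), and $h(x) = 1 - (c-ex)/2 = \tfrac{e}{2} x + 1 - \tfrac{c}{2}$ on the intermediate range $l < x \leq r$. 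The one point worth checking explicitly is that the middle threshold $m$ of the general expression \Eqref{eq:conaffbunprobadopt-main} becomes inactive here: under $\rho = 1$ the two integral branches degenerate and merge into the single affine branch above, so that only the two breakpoints $l$ and $r$ survive. This yields the claimed piecewise form of $h(x)$.

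Next I would observe the exact correspondence with the separate-offering case. Since $U/2$ is uniform on $[0,1]$, we may write $h(x) = \Pbb(U/2 > \tfrac{c}{2} - \tfrac{e}{2} x)$, which is precisely the adoption probability $h_i$ of Prop.~\ref{prop:conaffsepoffequ} under the substitution $c_i \mapsto c/2$ and $e_i \mapsto e/2$. In particular the thresholds map as $l_i = (c_i - 1)/e_i \mapsto (c-2)/e = l$ and $r_i = c_i/e_i \mapsto c/e = r$, and solving the fixed-point equation $\tfrac{e}{2} x^* + 1 - \tfrac{c}{2} = x^*$ reproduces the interior equilibrium $x^* = (1 - c/2)/(1 - e/2) = (2-c)/(2-e)$. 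Hence the three candidate equilibria $\{0,\,(2-c)/(2-e),\,1\}$ follow immediately, and the existence conditions are obtained by pushing the conditions of Prop.~\ref{prop:conaffsepoffequ} through the same substitution: $c_i > 1$ becomes $c > 2$, $e_i < c_i < 1$ becomes $e < c < 2$, and $e_i > c_i$ becomes $e > c$.

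Finally I would transfer the stability and lowest-stable-equilibrium statements. Stability is read off from the slope of $h$ relative to the diagonal: on the affine branch $h'(x) = e/2$, so the interior equilibrium is stable when $e/2 < 1$ (\ie $e < 2$) and unstable when $e > 2$, while the boundary equilibria $0$ and $1$ are stable exactly when $h$ lies below, respectively above, the diagonal in a one-sided neighborhood, giving $c > 2$ and $e > c$. Combining these slope conditions with the existence conditions of the previous paragraph yields the stated equivalences for each equilibrium, and selecting the smallest stable equilibrium reachable from $x(0) = 0$ produces the lseq.\ partition $\{c > 2,\; e < c < 2,\; c < \min\{e,2\}\}$. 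The argument is essentially a relabelling of the already-established Prop.~\ref{prop:conaffsepoffequ}, so there is no substantive obstacle; the only care required is bookkeeping at the threshold boundaries and confirming that the $\rho = 1$ degeneration of \Eqref{eq:conaffbunprobadopt-main} genuinely removes the middle branch rather than leaving a spurious kink at $m$.
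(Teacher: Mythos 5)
Your proposal is correct and follows essentially the same route as the paper: the paper likewise observes that $U=U_1+U_2$ is uniform on $[0,2]$ and obtains the corollary by applying Prop.~\ref{prop:conaffsepoffequ} under the substitution $c_i\mapsto c/2$, $e_i\mapsto e/2$. Your additional checks (the collapse of the middle threshold $m$ at $\rho=1$ and the slope-based stability verification) are consistent with, and slightly more explicit than, the paper's one-line justification.
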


The next part of the analysis is to compare the lowest stable
equilibria under separate ($(x_1^*,x_2^*)$) and bundled ($(x^*,x^*)$)
offerings as a function of the system parameters $(e_1,e_2,c_1,c_2)$.
The results are shown below.   
{\small
\begin{equation*}
\begin{array}{c|cc||ccc}
& & & 0 & \frac{2-c}{2-e} & 1 \\ \hline
& & & c > 2 & e<c<2 & c < e \land 2  \\ \hline \hline
(0,0) & c_1 > 1 & c_2 > 1 & SS & WW & WW \\
& & & \mbox{True} & \mbox{False} & \mbox{False} \\ \hline
\left(0,\frac{1-c_2}{1-e_2} \right) & c_1 > 1 & e_2 < c_2 < 1 & SL &
WL \mbox{ or } WW & WW \\ \hline
(0,1) & c_1 > 1 & c_2 < e_2 \land 1 & SL & WL & WS \\ \hline 
\left(\frac{1-c_1}{1-e_1},0\right) & e_1 < c_1 < 1 & c_2 > 1 & LS & LW \mbox{ or } WW & WW \\ \hline
(1,0) & c_1 < e_1 \land 1 & c_2 > 1 & LS & LW & SW \\ \hline
\left(\frac{1-c_1}{1-e_1},\frac{1-c_2}{1-e_2}\right) & e_1 < c_1 < 1 & e_2 < c_2 < 1 & LL & WL \mbox{ or } LW & WW \\
& & & \mbox{False} & & \mbox{False} \\ \hline
\left(\frac{1-c_1}{1-e_1},1 \right) & e_1 < c_1 < 1 & c_2 < e_2 \land 1 & LL & WL & WS \\
& & & \mbox{False} & & \\ \hline
\left(1,\frac{1-c_2}{1-e_2} \right) & c_1 < e_1 \land 1 & e_2 < c_2 < 1 & LL & LW & SW \\
& & & \mbox{False} & & \\ \hline
(1,1) & c_1 < e_1 \land 1 & c_2 < e_2 \land 1 & LL & LL & SS \\
& & & \mbox{False} & \mbox{False} & \mbox{True} \\ \hline
\end{array}
\end{equation*}
}
The nine rows are lowest stable equilibria $(x_1^*,x_2^*)$ under
separate offerings, with $x_i^* \in \{0,(1-c_i)/(1-e_i),1\}$ for $i
\in \{1,2\}$.  The third column corresponds to the lowest stable
equilibria $(x^*,x^*)$ under bundling with $x^* \in
\{0,(2-c)/(2-e),1\}$.  Each combination of row and third column entry,
say $(x_1^*,x_2^*,x^*)$, is a possible equilibrium triple without and
with bundling.  The second column gives the conditions on
$(c_1,c_2,e_1,e_2)$ for each $(x_1^*,x_2^*)$ to be the lowest stable
equilibria under separate offerings, and the second row of the third
column gives the conditions on $(c,e)$ for each $x^*$ to be the lowest
stable equilibria under bundling.

Each third column entry is labeled with a pair of letters
$(\Delta_1,\Delta_2)$ with $\Delta_i \in \{L,S,W\}$ for $i \in
\{1,2\}$ representing (L)ose, (S)ame, and (W)in, and denoting the
change in equilibrium under bundling for that service.  For example,
$SL$ means the equilibrium for service~$1$ stayed the same ($x_1^* =
x^*$), while the equilibrium for service~$2$ decreased ($x_2^* >
x^*$).  The notation $a\land b$ in the inequalities in the second
column, simply means that the inequality needs to be satisfied for
both $a$ ``AND'' $b$.  The word ``True'' indicates the equilibrium for the
column always results for the equilibria in the corresponding row, \eg
when $c_1 > 1$ and $c_2 > 1$, the bundled equilibrium $0$ always result
for the separate equilibria $(0,0)$ because the conditions on $c_1$
and $c_2$ imply $c > 2$.  Conversely, the word ``False'' indicates the
equilibrium for the column is never feasible for the equilibria in the
corresponding row. 

There are nine possible tuples.  Under perfectly positively
correlated user valuations, the bundle's valuation is essentially a
weighted sum of the valuations of the individual services, so that
most outcomes involve a trade-off between improving (or maintaining)
the adoption of one service and worsening (or maintaining) that of the
other.  Of note is the fact that a $LL$ outcome is not feasible.  This
is because, a bundle equilibrium of $0$ only arises when the less
valuable service also has an equilibrium of $0$ when offered alone,
which results in a $SL$ (or $LS$) outcome.  Because of the effect of
externalities, the converse is, however not true, \ie $WW$ outcomes
can be realized.  

$WW$ outcomes typically arise when one technology has a high adoption
cost together with a high externality factor, while the other
technology enjoys middling cost and externality factor.  In such
cases, the first technology could be tremendously successful, if only
it managed to acquire enough of a user base to unleash the
value its strong externality factor can deliver.  However, its high
adoption cost makes this nearly impossible.  Hence, when offered
alone, this technology never takes off.  In contrast, the relatively
low adoption cost of the other technology enables it to make rapid
initial progress even when offered alone.  Its initial adoption spurt,
however, quickly subsides as its externality contributions do not
progress fast enough to keep attracting more users.  This translates
in neither technology experiencing meaningful success when offered
alone.  Bundling can, however, change this.

When the two technologies are bundled, the second becomes the engine
that drives initial adoption until enough of a user-base has been
built to allow the first technology to cross its critical adoption
threshold.  At that point, the roles reverse and the first technology
becomes the main driver for continued adoption, as its strong
externality contribution is now sufficient to attract more users.  The
bundle's adoption then takes off, possibly reaching full penetration.
In the process, the second technology also reaches a level of
adoption it would never have realized on its own.
\subsection{Perfect negative correlation}
\label{ssec:pernegcor}

Specializing for $\rho = -1$ Props.~\ref{prop:jointcdfpdfcorrcont}
and~\ \ref{prop:conaffsumpdfcdf} of the Appendix
yields the joint and sum distributions for uniform random variables
$(U_1,U_2)$ satisfying $U = U_1 + U_2 = 1$:
\begin{eqnarray*}
F_{U_1,U_2}(u_1,u_2) = \max\{u_1+u_2-1,0\} & & f_{U_1,U_2}(u_1,u_2) = \mathbf{1}_{u_1+u_2=1} \nonumber \\
F_U(u) = \mathbf{1}_{u \geq 1} & & 
f_U(u) = \mathbf{1}_{u = 1}. 
\label{eq:conaffnegcdfpdf}
\end{eqnarray*}
The case of perfect negative correlation is simpler to analyze than
the case of perfect positive correlation.  All users now see the same
utility of $1+ex-c$ for the bundle.  The following corollary of
Prop.\ \ref{prop:conaffbunprobadopt} shows that when $c<1$ all users
immediately adopt, while seeding to an adoption level of $x = c-1$ is
needed to ensure full adoption when $e>c-1$, and adoption is never
feasible when $e<c-1$.    
\begin{corollary}
\label{cor:conaffnegprobadopt}
The probability of bundle adoption $h(x)$ in
Prop.\ \ref{prop:conaffbunprobadopt} under aggregate affinity
$U=U_1+U_2$ formed from perfectly negatively correlated uniform
affinities $(U_1,U_2)$ satisfying $U_1 + U_2=1$ is 
\begin{equation}
h(x) = \left\{ \begin{array}{ll}
0, \; & x < m \\
1, \; & x \geq m \end{array} \right.  
\end{equation}
The two possible equilibria (eq.) are $x^* \in \{0,1\}$, with
conditions for each: 
\begin{equation*}
0 \mbox{ is stable eq.} \Leftrightarrow c > 1, \quad
1 \mbox{ is stable eq.} \Leftrightarrow e > c-1 \label{eq:conaffnegeq1}
\end{equation*}
The lowest stable equilibrium (lseq.) adoption level for each $(c,e)$ is
\begin{equation*}
0 \mbox{ is lseq.} \Leftrightarrow c > 1, \quad
1 \mbox{ is lseq.} \Leftrightarrow c < 1 \label{eq:conaffnegeq2}
\end{equation*}
\end{corollary}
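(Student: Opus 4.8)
The plan is to exploit the fact that under perfect negative correlation the aggregate affinity $U = U_1 + U_2$ collapses to the deterministic value $1$, as recorded in the displayed sum distribution $f_U(u) = \mathbf{1}_{u=1}$ immediately preceding the statement. Everything then follows from evaluating the adoption probability $h(x) = \Pbb(U > c - ex)$ against this degenerate distribution, rather than from attempting to pass to the $\rho \to -1$ limit of the unwieldy expression in \eqref{eq:conaffbunprobadopt-main}, where $\sqrt{1-\rho^2} \to 0$ renders the integrand singular. This is the reason the perfect-negative case is genuinely simpler than the positive one.

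First I would substitute $U \equiv 1$ into $h(x) = \Pbb(U > c - ex)$ to get $h(x) = \Pbb(1 > c - ex) = \mathbf{1}_{ex > c-1} = \mathbf{1}_{x > m}$, with $m = (c-1)/e$, which is exactly the claimed step function (the value assigned at the lone jump point $x=m$ is immaterial for what follows). This makes precise the paper's observation that every user now sees the identical bundle utility $1 + ex - c$, so adoption is all-or-nothing, governed solely by the sign of $1 + ex - c$.

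Next I would locate the equilibria by solving the fixed-point equation $h(x^*) = x^*$ on $[0,1]$. Since $h$ takes only the values $0$ and $1$, the only candidates are $x^* = 0$ and $x^* = 1$. The point $0$ is a fixed point precisely when $h(0) = 0$, i.e.\ when $m > 0 \Leftrightarrow c > 1$; the point $1$ is a fixed point precisely when $h(1) = 1$, i.e.\ when $m < 1 \Leftrightarrow e > c-1$. For stability I would invoke the myopic adjustment dynamics implicit in the equilibrium notion of \eqref{eq:equil_s}: a fixed point $x^*$ is stable when $h(x) - x$ passes from positive to negative as $x$ increases through $x^*$. Just above $x = 0$ (when $c > 1$) one has $h(x) = 0 < x$, so $0$ is stable; just below $x = 1$ (when $e > c-1$) one has $h(x) = 1 > x$, so $1$ is stable. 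These two sign checks yield the two stated stability conditions.

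Finally, for the lowest stable equilibrium I would split on the sign of $c-1$. When $c > 1$ the stable point $0$ exists and is trivially the smallest achievable adoption level, giving lseq.\ $= 0$. When $c < 1$ we have $m < 0$, so $h(0) = 1 \neq 0$ and $0$ is not even a fixed point; the only equilibrium is $1$, which is stable since $e \geq 0 > c-1$, giving lseq.\ $= 1$. The only step deserving real care is the bookkeeping at the knife-edge values $c = 1$ and $e = c-1$ (and the ambiguous value of $h$ at $x = m$); but because both equilibria of interest sit at $x \in \{0,1\}$ rather than at the jump, these conventions do not affect the classification. I do not expect any substantive analytic obstacle: the degeneracy of $U$ reduces the entire argument to reading off a step function and checking one sign on either side of each candidate fixed point.
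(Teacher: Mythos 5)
Your proposal is correct and follows essentially the same route as the paper: specialize the sum distribution to the degenerate case $U \equiv 1$ (equivalently, note every user sees utility $1+ex-c$), read off the step function $h(x)=\mathbf{1}_{x>m}$ from $h(x)=\Pbb(U>c-ex)$, and classify the fixed points $\{0,1\}$ by the sign of $h(x)-x$ on either side. Your remark that the convention at the jump point $x=m$ is immaterial (since both equilibria sit at $0$ and $1$) is a fair and accurate observation.
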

We next compare the lowest stable equilibria without ($(x_1^*,x_2^*)$)
and with ($(x^*,x^*)$) bundling as a function of the system parameters
$(e_1,e_2,c_1,c_2)$.  In general, under perfect negative correlation,
the overall cost of the bundle is the dominant factor in determining
whether bundling is beneficial.  As shown, below, this yields very
different outcomes when compared to the case of perfect positive
correlation.  
{\small
\begin{equation*}
\begin{array}{c|cc||cc}
& & & 0 & 1 \\ \hline
& & & c > 1 & c < 1  \\ \hline \hline
(0,0) & c_1 > 1 & c_2 > 1 & SS & WW \\
& & & \mbox{True} & \mbox{False} \\ \hline
\left(0,\frac{1-c_2}{1-e_2} \right) & c_1 > 1 & e_2 < c_2 < 1 & SL & WW \\ 
& & & \mbox{True} & \mbox{False} \\ \hline 
(0,1) & c_1 > 1 & c_2 < e_2 \land 1 & SL & WS \\ 
& & & \mbox{True} & \mbox{False} \\ \hline 
\left(\frac{1-c_1}{1-e_1},0\right) & e_1 < c_1 < 1 & c_2 > 1 & LS & WW \\ 
& & & \mbox{True} & \mbox{False} \\ \hline
(1,0) & c_1 < e_2 \land 1 & c_2 > 1 & LS & SW \\ 
& & & \mbox{True} & \mbox{False} \\ \hline
\left(\frac{1-c_1}{1-e_1},\frac{1-c_2}{1-e_2}\right) & e_1 < c_1 < 1 & e_2 < c_2 < 1 & LL & WW \\ \hline
\left(\frac{1-c_1}{1-e_1},1 \right) & e_1 < c_1 < 1 & c_2 < e_2 \land 1 & LL & WS \\ \hline
\left(1,\frac{1-c_2}{1-e_2} \right) & c_1 < e_2 \land 1 & e_2 < c_2 < 1 & LL & SW \\ \hline
(1,1) & c_1 < e_1 \land 1 & c_2 < e_2 \land 1 & LL & SS \\ \hline
\end{array}
\end{equation*}
}
First, seven rather than eight of the nine equilibrium change pairs
$(\Delta_1,\Delta_2)$ are possible.  The two missing entries are $WL$
and $LW$ (as opposed to $LL$ for perfect positive correlation), \ie it
is not possible for the adoption levels of the two services to
simultaneously increase and decrease, respectively.  Second, if either
equilibrium under separate offerings is zero then the bundled
equilibrium is zero, \ie both services must be individually viable for
a bundled offering to succeed.  Again, this is unlike the perfect
positive correlation case, where pairing a service that was not viable
on its own with a more successful one, could result in a non-zero
adoption for the bundle (and even in some cases in a $WW$ outcome).
Third, when both equilibria under separate offerings are nonzero, the
bundled equilibria may be better than or equal to both equilibria, or
may be worse than or equal to both equilibria.  For example, the
separate offering equilibria pair $(x_1^*,x_2^*) =
((1-c_1)/(1-e_1),(1-c_2)/(1-e_2))$ (which requires $e_1 < c_1 < 1$ and
$e_2 < c_2 < 1$) may yield a bundled equilibria of $(0,0)$ if $c>1$ or
$(1,1)$ if $c<1$.  In the case of perfect positive correlation, the
bundle equilibrium is always at some intermediate value between the
two stand-alone equilibria.

The next section considers the intermediate configuration of
independent affinities in an attempt to explore when and how changes
occur between those two extremes.
\subsection{Independent affinities}
\label{ssec:uncor}


Specialization for $\rho = 0$ of Props.\ \ref{prop:jointcdfpdfcorrcont} and 
\ref{prop:conaffsumpdfcdf} in the Appendix yields the joint and sum
distributions for independent uniform random variables:
\begin{eqnarray*}
F_{U_1,U_2}(u_1,u_2) = u_1 u_2 & & f_{U_1,U_2}(u_1,u_2) = 1 \nonumber \\
F_U(u) = \left\{ \begin{array}{ll} \frac{u^2}{2}, \; & 0 \leq u \leq 1 \\
1-\frac{(2-u)^2}{2}, \; & 1 < u \leq 2 \end{array} \right. & & 
f_U(u) = \left\{ \begin{array}{ll} u, \; & 0 \leq u \leq 1 \\
2-u, \; & 1 < u \leq 2 \end{array} \right. . 
\label{eq:conaffindcdfpdf}
\end{eqnarray*}
This yields the following corollary to
Prop.\ \ref{prop:conaffbunprobadopt} of the Appendix.

\begin{corollary}
\label{cor:conaffindprobadopt}
Under aggregate affinity $U$ formed from independent uniform
affinities $(U_1,U_2)$ with distribution $F_U(.)$, the probability of
bundle adoption $h(x)$ is:
\begin{equation}
h(x) = \left\{ \begin{array}{lrcccl}
0, \; & & & x & \leq & l \\
\frac{1}{2}(2-(c-ex))^2, \; & l & < & x & \leq & m \\
\frac{1}{2}(2-(c-ex)^2), \; & m & < & x & \leq & r \\
1, \; & r & < & x & &  
\end{array} \right. 
\end{equation}
which is convex increasing on $l < x \leq m$ and concave increasing on
$m < x \leq r$ (recall $h(m) = 1/2$).  Besides $x^* \in \{0,1\}$, the
possible equilibria in $(0,1)$ are: 
\begin{equation}
x^* \in \left\{ \begin{array}{lll}
\xi^*_{l,\pm} \equiv \frac{1}{e^2}\left( (c-2)e+1 \pm \sqrt{2(c-2)e+1}\right), \; & l < x^* \leq m \\
\xi^*_{r,\pm} \equiv \frac{1}{e^2}\left( c e - 1 \pm \sqrt{2(e-c)e+1}\right),\; & m < x^* \leq r 
\end{array} \right.
\label{eq:conaffindeq1}
\end{equation}
The regions on the $(c,e)$ plane where these equilibria exist are
\begin{eqnarray*}
\Rmc_{l,\pm} &=& \{ (c,e) : \max\{l,0\} \leq \xi^*_{l,\pm} \leq \min\{m,1\}\} \nonumber \\
\Rmc_{r,\pm} &=& \{ (c,e) : \max\{m,0\} \leq \xi^*_{r,\pm} \leq \min\{r,1\}\} 
\label{eq:conaffindeqreg1}
\end{eqnarray*}
\end{corollary}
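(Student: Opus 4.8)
The plan is to read off $h(x)$ directly from its definition $h(x)=\Pbb(U>c-ex)=1-F_U(c-ex)$, using the triangular CDF $F_U$ stated immediately above the corollary, and then to locate the interior equilibria by solving $h(x)=x$ separately on each nonconstant branch.

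First I would introduce the shorthand $w\equiv c-ex$ for the adoption threshold, observing that $w$ decreases linearly in $x$ and equals $2$, $1$, $0$ exactly at $x=l$, $m$, $r$, respectively. The four branches then correspond to the four positions of $w$ relative to the support $[0,2]$ of $U$: for $w\geq 2$ (i.e.\ $x\leq l$) we have $\Pbb(U>w)=0$; for $1\leq w<2$ (i.e.\ $l<x\leq m$), $F_U(w)=1-(2-w)^2/2$ gives $h(x)=\tfrac12(2-(c-ex))^2$; for $0\leq w<1$ (i.e.\ $m<x\leq r$), $F_U(w)=w^2/2$ gives $h(x)=\tfrac12(2-(c-ex)^2)$; and for $w<0$ (i.e.\ $x>r$), $\Pbb(U>w)=1$. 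This reproduces the piecewise formula, and since both middle branches equal $1/2$ at $w=1$, we get continuity together with $h(m)=1/2$. For the shape claims I would differentiate: on $l<x\leq m$, $h'(x)=e(2-w)>0$ and $h''(x)=e^2>0$ (convex increasing), while on $m<x\leq r$, $h'(x)=ew\geq 0$ and $h''(x)=-e^2<0$ (concave increasing).

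The two endpoint equilibria $x^*\in\{0,1\}$ follow by evaluating $h$ at the endpoints directly. For the interior equilibria I would set $h(x)=x$ on each middle branch, obtaining a quadratic each time. The convex branch yields $(ex+2-c)^2=2x$, i.e.\ $e^2x^2-2[e(c-2)+1]x+(c-2)^2=0$; the discriminant collapses (the $e^2(c-2)^2$ terms cancel) to $4[2e(c-2)+1]$, and the quadratic formula gives exactly $\xi^*_{l,\pm}$. The concave branch yields $(c-ex)^2=2(1-x)$, i.e.\ $e^2x^2-2(ce-1)x+(c^2-2)=0$; here the discriminant collapses (the $e^2c^2$ terms cancel) to $4[2e(e-c)+1]$, giving $\xi^*_{r,\pm}$. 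The regions $\Rmc_{l,\pm}$ and $\Rmc_{r,\pm}$ are then nothing more than the feasibility constraints for these candidate roots: each root is an admissible equilibrium precisely when it is real and lies in the subinterval on which its defining branch holds, intersected with $[0,1]$, which is exactly the stated requirement $\max\{l,0\}\leq\xi^*_{l,\pm}\leq\min\{m,1\}$ and $\max\{m,0\}\leq\xi^*_{r,\pm}\leq\min\{r,1\}$.

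The main obstacle here is purely bookkeeping rather than any conceptual difficulty. The two discriminant simplifications must be carried out carefully (each collapses because the $e^2c^2$ or $e^2(c-2)^2$ contribution cancels against the linear coefficient's square), and one must keep track of which of the two $\pm$ roots actually falls inside the relevant subinterval, since in general at most one root per branch is admissible; this is precisely what the region definitions $\Rmc_{l,\pm},\Rmc_{r,\pm}$ encode.
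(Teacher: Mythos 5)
Your proposal is correct and takes essentially the same route as the paper: the piecewise form of $h$ is read off from $h(x)=1-F_U(c-ex)$ with the triangular CDF, convexity/concavity follow from the same two derivatives the paper computes, and the interior equilibria come from the same two quadratics $e^2x^2-2((c-2)e+1)x+(c-2)^2=0$ and $e^2x^2-2(ce-1)x+(c^2-2)=0$, whose discriminants collapse exactly as you describe. The only difference is that you spell out the derivation of $h$ from $F_U$ and the meaning of the feasibility regions, both of which the paper leaves implicit.
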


\begin{proof}
The first two derivatives of $h(x)$ are
\begin{equation*}
h'(x) = \left\{ \begin{array}{lrcccl}
e(2 - (c - ex)),\; & l & < & x & \leq & m \\
e(c-ex),\; & m & < & x  & \leq & r 
\end{array} \right. , ~~~ 
h^{''}(x) = \left\{ \begin{array}{lrcccl}
e^2, \; & l & < & x & \leq & m \\
-e^2, \; & m & < & x  & \leq & r 
\end{array} \right.
\end{equation*}
The equilibria are the solutions of $h(x) = x$, i.e., 
\begin{eqnarray*}
\frac{1}{2}(2 - (c-ex))^2 = x & \Leftrightarrow & e^2 x^2 - 2((c-2)e+1)x + (c-2)^2 = 0, \; l < x \leq m \nonumber \\
\frac{1}{2}(2-(c-ex)^2) = x & \Leftrightarrow & e^2 x^2 - 2 (ce-1)x + (c^2-2) = 0, \; m < x \leq r
\end{eqnarray*}
The solutions are given by \Eqref{eq:conaffindeq1}.  
\end{proof}


An explicit comparison of the equilibria with and without bundling as
in the two previous sections appears to be complicated.  Without
bundling, the equilibria $(x_1^*,x_2^*)$ are such that $x_i^*$ depends
upon $(c_i,e_i)$ as in Prop.\ \ref{prop:conaffsepoffequ}.  With
bundling, the equilibria $(x^*,x^*)$ is such that $x^*$ depends upon
$(c,e)$ (where $c = c_1+c_2$ and $e=e_1+e_2$) as in
Cor.\ \ref{cor:conaffindprobadopt}.  \fig{fig:conaffindeqregcomb1} in
the Appendix illustrates the complex shapes of the bundled equilibria
regions even in this relatively simple case of independent affinities.


\subsection{Summary}
\label{ssec:conaffsum}

Sections~\ref{ssec:perposcor} and~\ref{ssec:pernegcor} hint at a
transition in the impact of correlation on bundling.
Section~\ref{ssec:uncor} unfortunately illustrates that while a direct
analysis is feasible, it is cumbersome, which makes extracting insight
into when bundling can improve adoption challenging.
As a result, the next section introduces a discrete affinity model
that preserves users' heterogeneity, but allows us to explicitly
explore the impact of correlation.  Section~\ref{sec:impact_rho_con}
assesses through numerical investigations the robustness of the
results obtained using this simplified discrete model.


\section{Discrete affinities}
\label{sec:disaff}

In this section, we model user affinities as a pair of Bernoulli
random variables $(U_1,U_2) \in \{0,1\}^2$ with joint distribution
parameterized by $p \in [0,1]$:  
\begin{equation*}
\label{eq:disaffjointdisbn}
\begin{array}{r|cc|c}
U_1 \backslash U_2 & 0 & 1 & \\ \hline
0 & (1-p)/2 & p/2 & 1/2 \\
1 & p/2 & (1-p)/2 & 1/2 \\ \hline
& 1/2 & 1/2 & 
\end{array}
\end{equation*}
The user population consists of four types: negative affinities for
both services $(0,0)$, positive affinities for both services $(1,1)$,
and mixed service affinities $(0,1)$ and $(1,0)$. Note the marginals
are independent of the parameter $p$, and are in fact uniform, \ie
$\Pbb(U_1 = 1) = \Pbb(U_2 = 1) = 1/2$.  Thus, exactly half of the
population has a positive affinity for each service, regardless of
$p$.  Although the discrete model is a simplification of the
continuous model of Section~\ref{sec:conaff}, it facilitates study of
the impact of correlation in user service affinities.  The correlation
between $(U_1,U_2)$ is
\begin{equation*}
\rho = \frac{\Ebb[U_1 U_2] - \Ebb[U_1] \Ebb[U_2]}{\sqrt{\mathrm{Var}(U_1)\mathrm{Var}(U_2)}} = \frac{\frac{1-p}{2} - \frac{1}{2} \times \frac{1}{2}}{\sqrt{\frac{1}{4} \times \frac{1}{4}}} = 1 - 2 p, 
\end{equation*}
which ranges from $\rho = -1$ for $p = 1$ (all users have mixed
affinities, $p_{01} = p_{10} = 1/2$) up to $\rho = +1$ for $p = 0$
(all users' affinities are either both positive or both negative,
$p_{00} = p_{11} = 1/2$).  

\subsection{Separate offerings}
\label{ssec:disaffsep}

The probability of a user adopting service $i \in \{1,2\}$ under separate service offerings and the resulting equilibria are given in the following proposition. 
\begin{proposition}
\label{pro:disaffsep}
When $U_i$ is uniformly distributed on $\{0,1\}$, the probability of user adoption of service $i$ in \Eqref{eq:equil_s} becomes
\begin{equation}
h_i(x_i) = 
\left\{ \begin{array}{lrcccl}
0, \; & & & x_i & \leq & l_i \\
\frac{1}{2}, \; & l_i & < & x_i & \leq & r_i \\
1, \; & r_i & < & x_i & & 
\end{array} \right.
\end{equation}
for adoption thresholds $l_i \equiv \frac{c_i-1}{e_i}$ and $r_i \equiv \frac{c_i}{e_i}$.  The three possible equilibria are $x_i^* \in \{0,1/2,1\}$.  The conditions for each equilibrium (eq.) are:
\begin{equation}
\begin{array}{ccrcccl}
0 \mbox{ is stable eq.} & \Leftrightarrow & 1 & \leq & c_i & &  \nonumber \\
\frac{1}{2} \mbox{ is stable eq.} & \Leftrightarrow & 2(c_i-1) & \leq & e_i & \leq & 2 c_i \nonumber \\
1 \mbox{ is stable eq.} & \Leftrightarrow & c_i & \leq & e_i & & 
\end{array}
\end{equation}
The lowest stable equilibrium (lseq.) adoption level for each $(c_i,e_i)$ is
\begin{equation}
\begin{array}{cccrclcrcl}
0 & \mbox{ is lseq.} & \Leftrightarrow & c_i & \geq & 1 &&&& \nonumber \\
\frac{1}{2} & \mbox{ is lseq.} & \Leftrightarrow & c_i & \leq & 1 & \mbox{ and } & e_i &  \leq & 2 c_i \nonumber \\ 
1 & \mbox{ is lseq.} & \Leftrightarrow & c_i  & \leq & 1 & \mbox{ and } & e_i & \geq & 2 c_i  
\end{array}
\end{equation}
\end{proposition}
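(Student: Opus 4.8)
The statement is the discrete analogue of Prop.\ \ref{prop:conaffsepoffequ}, so I would follow the same three-stage recipe—derive $h_i$, locate the fixed points of $x\mapsto h_i(x)$, then classify their stability—adapting only the stability argument to the fact that $h_i$ is now piecewise constant.

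First I would compute $h_i$ straight from \eqref{eq:equil_s}, $h_i(x_i)=\Pbb(U_i > c_i - e_i x_i)$, for a Bernoulli$(1/2)$ variable $U_i\in\{0,1\}$. Writing $t=c_i-e_ix_i$ and using that $U_i$ puts mass $1/2$ on each of $0$ and $1$, the event $\{U_i>t\}$ has probability $1$ when $t<0$, probability $1/2$ when $0\le t<1$, and probability $0$ when $t\ge 1$. Substituting $t=c_i-e_ix_i$ and rewriting the three ranges in terms of $x_i$ via $l_i=(c_i-1)/e_i$ and $r_i=c_i/e_i$ gives $h_i=1$ for $x_i>r_i$, $h_i=1/2$ for $l_i<x_i\le r_i$, and $h_i=0$ for $x_i\le l_i$, which is exactly the claimed step function. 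This is the only computational step and it is immediate.

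Next, since $h_i$ takes values only in $\{0,1/2,1\}$, any solution of $h_i(x_i^*)=x_i^*$ must lie in $\{0,1/2,1\}$, which already pins down the three candidate equilibria. Reading off the step function, $0$ is a fixed point iff $l_i\ge 0$ (i.e.\ $c_i\ge 1$); $1/2$ is a fixed point iff $l_i<1/2\le r_i$ (i.e.\ $2(c_i-1)<e_i\le 2c_i$); and $1$ is a fixed point iff $r_i<1$ (i.e.\ $c_i<e_i$). The part needing genuine care is stability: unlike the continuous case of Prop.\ \ref{prop:conaffsepoffequ}, where the criterion is $h_i'(x^*)\lessgtr 1$, here $h_i$ is flat away from its two jumps so I must instead examine the sign of $h_i(x)-x$ under the monotone map on each side of a fixed point. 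Because $h_i$ is constant in a neighborhood of $1/2$, $h_i(x)-x$ is positive just to the left and negative just to the right, so $1/2$ is stable whenever it is an equilibrium; $1$ is stable since $h_i\equiv 1$ to its left and the domain ends at $1$; and $0$ is stable whenever $l_i>0$. The endpoints of the listed inequalities are precisely the boundary values where a jump point $l_i$ or $r_i$ coincides with a candidate equilibrium, and—mirroring the remark following Prop.\ \ref{prop:conaffsepoffequ}—the stable-equilibrium conditions are deliberately stated as a (non-disjoint) cover, which is why the weak inequalities $2(c_i-1)\le e_i$ and $c_i\le e_i$ appear rather than strict ones.

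Finally, for the lowest stable equilibrium I would intersect the three stability regions and keep the smallest stable value at each $(c_i,e_i)$. When $c_i\ge 1$, $0$ is stable and minimal, so lseq.\ $=0$. When $c_i\le 1$, $0$ is excluded and the lower bound $2(c_i-1)\le e_i$ for $1/2$ holds automatically, so the outcome is decided by the single remaining threshold: $e_i\le 2c_i$ makes $1/2$ stable and hence minimal, while $e_i\ge 2c_i$ removes $1/2$ and leaves only $1$ (which is stable there since $e_i\ge 2c_i\ge c_i$). This produces the stated three-way partition of the plane. The main—and only mild—obstacle throughout is bookkeeping the boundary cases, i.e.\ assigning stability consistently exactly at the jump points $l_i$ and $r_i$; everything else is a direct reading of the step function.
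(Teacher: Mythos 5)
Your proof is correct, and it is precisely the ``straightforward'' direct verification that the paper omits: compute the step function $h_i$ from the Bernoulli law, read off the fixed points of $x\mapsto h_i(x)$ in $\{0,1/2,1\}$, and settle stability via the sign of $h_i(x)-x$ near each jump. Your handling of the one genuinely delicate point --- that stability for a piecewise-constant map cannot be read from a derivative condition, and that the weak inequalities at $e_i=2(c_i-1)$, $e_i=2c_i$, $c_i=1$ are boundary (cover, not partition) conventions --- is consistent with how the paper treats the analogous continuous case.
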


The proof of Prop. \ref{pro:disaffsep} is straightforward and is
omitted. 
All seven nonempty subsets of $\{0,1/2,1\}$ may coexist as equilibria,
and all equilibria are stable.  If costs are high $(c_i \geq 1)$ then
no adoption is possible; likewise if the externality is high $(e_i
\geq c_i$) then full adoption is possible. Intermediate-level ($x_i^*
= 1/2$) adoption is possible for externalities that are moderate with
respect to the cost.     


\subsection{Bundled offerings}
\label{ssec:disaffbun}

The probability of a user adopting a bundled service offering and the
resulting equilibria are given in the following proposition. 

\begin{proposition}
\label{pro:disaffbun}
When $(U_1,U_2)$ are distributed on $\{0,1\}^2$ according to \Eqref{eq:disaffjointdisbn} with parameter $p \in [0,1]$ and correlation $\rho = 1-2p \in [-1,1]$, the probability of user adoption of the bundle in \Eqref{eq:BunCCDF} becomes
\begin{equation}
\label{eq:disaffbunprobadopt}
h(x) = \left\{ \begin{array}{lrcccl}
0, \; & & & x & \leq & l \\
\frac{1+\rho}{4}, \; & l & < & x & \leq & m \\
\frac{3-\rho}{4}, \; & m & < & x & \leq & r \\
1, \; & r & < & x & & 
\end{array} \right., 
\end{equation}
for adoption thresholds $l \equiv \frac{c-2}{e}$, $m \equiv \frac{c-1}{e}$, $r \equiv \frac{c}{e}$.  The four possible equilibria are 
$x^* \in \{0,(1+\rho)/4,(3-\rho)/4,1\}$.  The conditions for each equilibrium (eq.) are:
\begin{equation}
\begin{array}{cccccrcccl}
0 & \mbox{ is stable eq.} & \Leftrightarrow & c & \geq & 2 & & \\
\frac{1+\rho}{4} & \mbox{ is stable eq.} & \Leftrightarrow & \frac{4(c-2)}{1+\rho} & \leq & e & \leq & \frac{4(c-1)}{1+\rho} \\ 
\frac{3-\rho}{4} & \mbox{ is stable eq.} & \Leftrightarrow & \frac{4(c-1)}{3-\rho} & \leq & e & \leq & \frac{4c}{3-\rho} \\ 
1 & \mbox{ is stable eq.} & \Leftrightarrow & c & \leq & e & & \\
\end{array}
\end{equation}
The lowest stable equilibrium (lseq.) adoption level for each $(c_i,e_i)$ is
\begin{equation}
\label{eq:disaffbunofflseq}
\begin{array}{cccccrcccl}
0 & \mbox{ is lseq.} & \Leftrightarrow & c \geq 2 & & & & & & \\
\frac{1+\rho}{4} & \mbox{ is lseq.} & \Leftrightarrow & c < 2 & \mbox{and} & 0 & \leq & e & \leq & \frac{4(c-1)}{1+\rho} \\ 
\frac{3-\rho}{4} & \mbox{ is lseq.} & \Leftrightarrow & c < 2 & \mbox{and} & \frac{4(c-1)}{1+\rho} & < & e & \leq & \frac{4c}{3-\rho} \\ 
1 & \mbox{ is lseq.} & \Leftrightarrow & c < 2 & \mbox{and} & & & e & > & \max \left\{\frac{4c}{3-\rho},\frac{4(c-1)}{1+\rho} \right\}
\end{array}
\end{equation}
\end{proposition}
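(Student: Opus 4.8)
The plan is to proceed exactly as in the continuous-affinity corollaries, but the discreteness of $U=U_1+U_2$ makes every step elementary. First I would compute the distribution of the aggregate affinity from the joint distribution table. Summing the diagonal and off-diagonal entries gives a three-point law on $\{0,1,2\}$: $\Pbb(U=0)=\Pbb(U=2)=(1-p)/2$ and $\Pbb(U=1)=p$. Substituting $p=(1-\rho)/2$ turns these into $\Pbb(U=0)=\Pbb(U=2)=(1+\rho)/4$ and $\Pbb(U=1)=(1-\rho)/2$. Then $h(x)=\Pbb(U>c-ex)$ is obtained by evaluating the complementary CDF of $U$ at the argument $t=c-ex$, which is strictly decreasing in $x$ (assuming $e>0$) and passes through the integer values $2,1,0$ precisely at $x=l,m,r$. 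On $x\le l$ we have $t\ge 2$ so $h=0$; on $l<x\le m$ we have $1\le t<2$ so $h=\Pbb(U=2)=(1+\rho)/4$; on $m<x\le r$ we have $0\le t<1$ so $h=\Pbb(U\ge 1)=(1-\rho)/2+(1+\rho)/4=(3-\rho)/4$; and on $r<x$ we have $t<0$ so $h=1$. This is exactly \eqref{eq:disaffbunprobadopt}.

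For the equilibria I would use that $h$ is a non-decreasing step function taking only the four values $\{0,(1+\rho)/4,(3-\rho)/4,1\}$. Any solution of $h(x^*)=x^*$ must therefore have $x^*$ equal to one of these four values \emph{and} lie in the region on which $h$ attains that value. Each such membership condition is a pair of linear inequalities in $(c,e)$ obtained from the threshold definitions $l=(c-2)/e$, $m=(c-1)/e$, $r=c/e$: for instance $(1+\rho)/4$ is a fixed point iff $l<(1+\rho)/4\le m$, which rearranges to $4(c-2)/(1+\rho)\le e\le 4(c-1)/(1+\rho)$, and similarly for the other three levels. This yields the four stated equilibrium conditions. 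Stability is immediate: inside each flat region $h'\equiv 0$, so the map $x_{n+1}=h(x_n)$ is locally constant and collapses onto the fixed point in a single step; consequently every equilibrium is stable, and (in contrast to the continuous case of Cor.~\ref{cor:conaffposprobadopt}) no unstable equilibrium arises.

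For the lowest stable equilibrium I would exploit monotonicity of $h$. Starting the adoption dynamics at $x_0=0$, the sequence $x_{n+1}=h(x_n)$ is non-decreasing (since $h(0)\ge 0$ and $h$ is non-decreasing), hence converges to the \emph{smallest} fixed point of $h$. It then suffices to read off, in increasing order, which of $0,(1+\rho)/4,(3-\rho)/4,1$ is the first that qualifies as a fixed point: $0$ qualifies iff $c\ge 2$; failing that, $(1+\rho)/4$ qualifies iff $e\le 4(c-1)/(1+\rho)$ (the lower bound $e\ge 4(c-2)/(1+\rho)$ being automatic once $c<2$); failing that, $(3-\rho)/4$ qualifies iff $e\le 4c/(3-\rho)$; and otherwise the iteration climbs all the way to $1$. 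This reproduces \eqref{eq:disaffbunofflseq}, the last case being governed by $e>\max\{4c/(3-\rho),4(c-1)/(1+\rho)\}$.

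The computations are routine; the only points requiring care are the boundary conventions (strict versus non-strict inequalities at the thresholds $l,m,r$, where $h$ jumps, and the degenerate endpoints $\rho=\pm 1$) and verifying the redundancy of the lower-bound constraints when identifying the lseq, specifically that $e>4(c-1)/(1+\rho)$ already forces $e\ge 4(c-1)/(3-\rho)$, using $1+\rho\le 3-\rho$ for $\rho\le 1$. I expect the bookkeeping of these region overlaps, namely that the equilibrium conditions form a cover but not a partition of the $(c,e)$ plane whereas the lseq conditions partition it, to be the main, though modest, obstacle.
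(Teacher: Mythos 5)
Your proposal is correct and follows essentially the same route as the paper: the paper computes $h(x)=\Pbb(U>c-ex)$ by conditioning on the four $(U_1,U_2)$ outcomes, which is just a reorganization of your step of first deriving the three-point law of $U$ and evaluating its complementary CDF, and it then declares the equilibrium and lseq characterizations "straightforward and omitted," which is exactly the bookkeeping you carry out via the monotone fixed-point iteration. No gaps.
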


\begin{proof}
Conditioning on $(U_1,U_2)$ gives the user adoption probability as:  
\begin{eqnarray}
h(x) 
&=& \Pbb(V(x) > 0) = \Pbb(U > c - e x) \nonumber \\
&=& \Pbb(U > c - e x | (U_1,U_2) = (0,0)) \Pbb((U_1,U_2) = (0,0)) + \nonumber \\
& & \Pbb(U > c - e x | (U_1,U_2) = (0,1)) \Pbb((U_1,U_2) = (0,1)) + \nonumber \\ 
& & \Pbb(U > c - e x | (U_1,U_2) = (1,0)) \Pbb((U_1,U_2) = (1,0)) + \nonumber \\
& & \Pbb(U > c - e x | (U_1,U_2) = (1,1)) \Pbb((U_1,U_2) = (1,1)) \nonumber \\ 
&=& \frac{1-p}{2} \Pbb(e x > c) + \frac{p}{2} \Pbb(1 + e x > c) + \frac{p}{2} \Pbb(1 + e x > c) + \frac{1-p}{2} \Pbb(2 + e x > c) \nonumber \\ 
&=& \frac{1-p}{2} \mathbf{1}_{x > l} + p \mathbf{1}_{x > m} +
\frac{1-p}{2} \mathbf{1}_{x > r} 
\end{eqnarray}
The characterization of the equilibria and the lseq.\ are straightforward and are omitted.
\end{proof}

As with separate offerings,
no adoption is possible if costs are high $(c \geq 2)$, and full
adoption is possible if the externality is high $(e \geq c)$. The
intermediate equilibria $(1+\rho)/4,(3-\rho)/4$ are possible when the
externality is moderate with respect to the cost.  Of interest is
identifying regions where bundling yields a higher adoption
equilibirium, \ie $WW$ scenarios, and and in particular how this
outcome may be affected by $\rho$.
Exploring this issue is the topic of Section~\ref{ssec:disaffsum}.


\subsection{Bundling's impact on equilibria}
\label{ssec:disaffsum}
\begin{figure*}[htbp] 
\begin{tiny}
\begin{equation*}
\begin{array}{c|cc||cccc}
& & \mbox{BunEq}\Rightarrow & 0 & \frac{1+\rho}{4} & \frac{3-\rho}{4} & 1 \\ \hline
& & & c > 2 & c < 2 & c < 2  & c < 2  \\ 
& & & & (1+\rho) e < 4(c-1) & (1+\rho) e > 4(c-1)  & 
(1+\rho) e > 4(c-1) \\
\mbox{SepEq} &\mbox{SepEq conditions} & & & & (3-\rho) e < 4 c &
(3-\rho) e > 4 c  \\ \hline \hline  

(0,0) & c_1 > 1 & c_2 > 1 & SS & WW & WW & WW \\
& & & \mbox{True} & \mbox{False} & \mbox{False} & \mbox{False} \\ \hline

(0,1/2) & c_1 > 1 & c_2 < 1 &  SL & WL & WW & WW \\
& & e_2 < 2 c_2 &  &  &  &  \\ \hline 

(0,1) & c_1 > 1 & c_2 < 1 & SL & WL & WL & WS \\ 
& & e_2 > 2 c_2 & & & \\ \hline 

(1/2,0) & c_1 < 1 & c_2 > 1 & LS & LW & WW & WW \\
& e_1 < 2 c_1 & & & & \\ \hline 

(1,0) & c_1 < 1 & c_2 > 1 & LS & LW & LW & SW \\
& e_1 > 2 c_1 & & & & \\ \hline 

(1/2,1/2) & c_1 < 1 & c_2 < 1 & LL & LL & WW & WW \\
& e_1 < 2 c_1 & e_2 < 2 c_2 & \mbox{False} &   & & \\ \hline

(1/2,1) & c_1 < 1 & c_2 < 1 & LL & LL & WL & WS \\ 
& e_1 < 2 c_1 & e_2 > 2 c_2 & \mbox{False} & & & \\ \hline

(1,1/2) & c_1 < 1 & c_2 < 1 & LL & LL & LW & SW \\
& e_1 > 2 c_1 & e_2 < 2 c_1 & \mbox{False} & & & \\ \hline

(1,1) & c_1 < 1 & c_2 < 1 & LL & LL & LL & SS \\ 
& e_1 > 2 c_1 & e_2 > 2 c_1 & \mbox{False} &  &  &  \\ \hline 
\end{array}
\end{equation*}
\end{tiny}
\caption{Rows: equilibria under separate offerings; Columns:
  equilibria under bundling.\\  
Individual table entries show changes in equilibrium under bundling
  (Same, Win, Lose) for each service and whether they can occur
  (True/False).
\label{fig:disafftab}}
\end{figure*}
There are $3\times 3\times 4 = 36$ possible lseq.~combinations
$((x_1^*,x_2^*),x^*)$ where $(x_1^*,x_2^*)$ is the separate offering
lowest equilibria, and $x^*$ is the bundling lowest equilibrium.  The
table of \fig{fig:disafftab} lists all $36$ combinations and
identifies the conditions under which each holds and whether
bundling is beneficial or not.
As in Section~\ref{sec:conaff},
equilibria under separate offerings form the rows, while the four
equilibria under bundling form the (left-most) columns.  The row
headings (second column) give the requirements on $(c_1,c_2,e_1,e_2)$
for a particular pair of separate offering lower equilibria.  The
column headings (second row) give the requirements on $(c,e,\rho)$ for a
particular bundled equilibrium to be the lowest equilibrium.

Individual entries in the table identify how bundled equilibria
compare to equilibria under separate offerings, \ie as before a
``win'' (W), a ``Loss'' (L), or the ``Same'' (S), and whether
individual combinations are feasible (True) or not (False).  Note that
in several instances, row and column conditions are redundant, \eg
$c_1>1$ and $c_2>1$ obviously imply $c>2$, so that
simplifications are possible.
For clarity of presentation, we omit
specifying those more compact requirements in the table.

Several observations follow from the table,
and in particular how $\rho$ affects the
emergence of $WW$ combinations. Of note is that the
configurations that yield $WW$ outcomes are qualitatively consistent
with those of Section~\ref{ssec:perposcor}, \eg combining a low-cost,
low externality technology, with a high-cost, high
externality one can improve adoption for both.  The
table, however, also reveals a more ambivalent role for
$\rho$ than the two extreme configurations of
Sections~\ref{ssec:perposcor} and~\ref{ssec:pernegcor} seemed to
indicate.  In particular, consider the conditions $(1+\rho)e>4(c-1)$
and $(3-\rho)e>4c$ that are required to hold for $1$ to be an
equilibrium under bundling.  Increasing (decreasing) $\rho$ makes it
easier (harder) for the
first condition to be met, but is clearly detrimental (beneficial) to
the second. 
Similarly, varying $\rho$ can also allow the emergence of $WW$
scenarios present in Section~\ref{ssec:pernegcor} but
not~\ref{ssec:perposcor}, \ie combining two middling technologies,
$(x_1^*,x_2^*)=(1/2,1/2)$, can benefit both under certain conditions.
The next section investigates this more extensively.



Other observations are also possible from the table, and we summarize
next some of the more relevant ones.  First, if $(0,0)$ is the
separate offering equilibrium then $0$ is the bundled equilibrium, \ie
bundling cannot help.  This is because $c_1 > 1$ and $c_2 > 1$ implies
$c > 2$.  Second, if $(1,1)$ is the separate offering equilibrium,
then it is possible for the bundled equilibrium to be either
$(1+\rho)/4$ or $(3-\rho)/4$, \ie bundling can result in an $LL$
outcome. Third, if the separate offering equilibria are both non-zero,
then bundling cannot cause the equilibrium to drop to zero, but it can
cause it to drop (to $(1+\rho)/4$, which can be made arbitrarily close
to $0$).  This happens when $(1+\rho)e < 4 (c-1)$, \ie when the bundle
cost is relatively large $(c>1)$ and when the correlation coefficient
$\rho$ is small enough.  Fourth, if the separate offering equilibria
are both below one but at least one is non-zero, then bundling can
increase the equilibrium to $(3-\rho)/4$ or $1$, provided the bundle's
cost is not too high $(c<2)$.  For example, when the separate
offerings equilibria are $(0,1/2)$, the bundled offering equilibrium
is either $(3-\rho)/4$ or $1$ provided $c<2$ and $(1+\rho)e > 4(c-1)$.
In the next section, we explore further the impact that $\rho$ has on
the potential benefits that bundling can yield.



\section{Guidelines and interpretations}
\label{sec:guidelines}

The traditional ``wisdom'' in developing bundling strategies, \eg
see~\cite{VenMah2009}, is that bundling is typically most effective in
the presence of \emph{negative correlation} in user valuations
(reservation prices).  The intuition is that bundling reduces
heterogeneity in users' valuations, which facilitates the selection of
a ``price'' for a bundled offering that results in an overall higher
profit (see~\cite[Section 2.3]{VenMah2009}).


There are obviously differences between the profit maximization goal
of traditional bundling strategies, and our goal of maximizing
adoption given a fixed adoption cost that will typically be different
from the price that would optimize profit\footnote{This is not to say
  that it is not if interest to explore how changes in cost, \eg
  through incentives, affect adoption, but this aspect is beyond the
  focus of this initial investigation.}.  The other important
difference between our formulation and that of traditional bundling
strategies is the presence of externalities.  Hence, we can expect
both factors to contribute to possible differences in outcomes, with
the latter (presence of externalities) likely to have a stronger
influence.

In particular, it is relatively easy to see from \Eqrefs{eq:a}{eq:b}
that without externalities, assessing whether bundling benefits
adoption is straightforward.  Specifically, when services are
offered separately, adoption levels are simply equal to $1-F_1(c_1)$
and $1-F_2(c_2)$, where $F_i(x)$ is the CDF of users'
valuation for service~$i$.  Conversely, the adoption level of the
bundle is given by $1-F(c_1+c_2)$, where $F(x)$ is the CDF of the
random variable $U=U_1+U_2$ that captures the cumulative valuation of
the two services to a (random) user).  Hence, in the absence of
externalities, whether bundling is beneficial (improves overall
adoption) or not is solely a function of how the bundle's cost
compares to the cost of individual services.

On the other hand, as the models of both Sections~\ref{sec:conaff}
and~\ref{sec:disaff} revealed, more complex behaviors emerge when
externalities are present.  In particular, 
the models revealed that $WW$ outcomes can arise under two general
scenarios.  The first involves bundling a service with a high
externality factor and a high adoption cost, with a second service
that enjoys middling cost and externality factor.  Alternatively, $WW$
outcomes may also arise from bundling two middling services that
alone cannot create sufficient externality value to reach a high level
of adoption, but which together could.  In both cases, correlation
$(\rho)$ in how individual users value the services can affect the
outcome.

\subsection{On the role of correlation (discrete model)}
\label{sec:impact_rho_dis}

\begin{figure}[htbp]
\centering
\hspace{-0.4cm}
\includegraphics[width=5.6in]{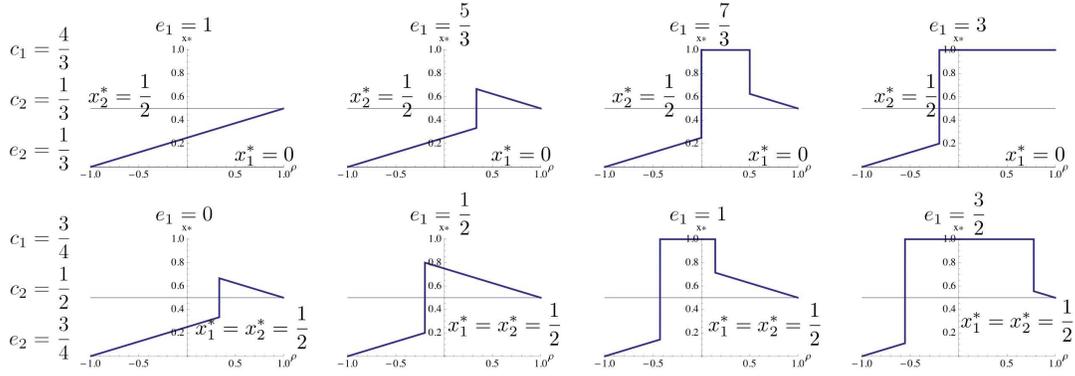}
\caption[caption]{\label{fig:bundling-disc}Impact of value
  correlation $(\rho)$ on bundle adoption 
  $(x^*)$ for different technology combinations (discrete model).
} 
\end{figure}


The impact of $\rho$ is illustrated in \fig{fig:bundling-disc} that
plots as a function of $\rho\in[-1,1]$, the adoption level of a
technology bundle for different instances of the two above scenarios
under the discrete correlation model of Section~\ref{sec:disaff}.

Specifically, the upper part of \fig{fig:bundling-disc} displays
adoption levels when bundling two heterogeneous technologies.
Technology~$1$ has a high cost, $c_1=4/3,$ which prevents it from
taking off on its own, \ie its stand-alone adoption remains at
$x_1^*=0$, irrespective of its externality factor $e_1$.
Technology~$2$ has a low cost, $c_2=1/3,$ but marginal externality,
$e_2=1/3,$ so that $x_2^*=1/2$.  Combining the two technologies can
benefit both, but only when the externality $e_1$ of technology~$1$ is
high enough, \ie $e_1\geq 5/3$ (three right most plots).  When $e_1$
is low, \ie $e_1\leq 1$, technology~$1$ still benefits from being
bundled with technology~$2$, but the reverse is not true $(x^*\leq
1/2)$.  More interesting though than the impact of $e_1$ in creating a
$WW$ outcome, is the role of $\rho$.

Specifically, when $e_1$ is large enough, the benefits of bundling
arise only once $\rho$ exceeds a certain threshold.
This is because early adopters of the bundle are
driven primarily by the second technology, and under highly negative
correlation in technology valuations, the first technology contributes
added cost but little or no added value to those early adopters.
Hence, adoption stops quickly at a level below that of the second
technology offered alone.  As correlation increases, the number of
early adopters that derive positive utility from adopting the
bundle increases to a point where adoption can reach enough of a
critical mass to allow the externality effect of the first technology
to become effective.  This allows adoption to increase beyond what the
second technology alone would have realized.

Note though that further increases in correlation need not yield
additional improvements.  As a matter of fact, increasing $\rho$
beyond the threshold can lower adoption (second plot from the left,
$e_1=5/3$).  This is because as correlation increases, the potential
adoption ``base'' of the bundle narrows (both technologies appeal to
an increasingly similar set of users), which limits the adoption
equilibrium that can be reached.  This effect persists until the
externality factor of technology~$1$ is strong enough to allow the
bundle to reach full adoption (third and fourth plots from the left
for $e_1=7/3, 3$).  As the externality factor of technology~$1$
continues increasing, its strength becomes sufficient to preserve full
adoption for some range of $\rho$ beyond the initial threshold.
Further increases of $\rho$ outside that range can, however, result in
the adoption level of the bundle dropping again (third plot from the
left, $e_1=7/3$).  This is only avoided once the externality factor of
the first technology is strong enough that the range of $\rho$ values
for which no decline in bundle's adoption occurs extends all the way
to $\rho=1$ (right-most plot for $e_1=3$).

Conversely, the lower part of \fig{fig:bundling-disc} considers the
bundling of two ``middling'' technologies, which alone only realize a
relatively low adoption level $x_1^*=x_2^*=1/2$.  They both have
reasonably low costs, $c_1=3/4, c_2=1/2$, and can benefit from
bundling when their combined externality factor, $e=e_1+e_2$ is high
enough.  The four plots display (left to right) adoption as a function
of $\rho$ and for increasing values of $e$ ($e_2=3/4$ and $e_1$ varies
from $0$ to $3/2$).  They offer a qualitatively similar behavior as
the upper part of \fig{fig:bundling-disc}, albeit with a more limited
range, \eg $WL$ outcomes can be eliminated (if $\rho$ is high enough) and
decreases in adoption as $\rho$ keeps increasing cannot be avoided. 
This is not unexpected since the constraint that $x_1^*=x_2^*=1/2$
limits the range of costs and externality factors permissible.  

In the next section, we explore the extent to which the above
conclusions remain qualitatively valid under the more general model of
continuous affinities of Section~\ref{sec:conaff}.
\begin{figure}[htbp]
\centering
\hspace{-0.4cm}
\includegraphics[width=5.6in]{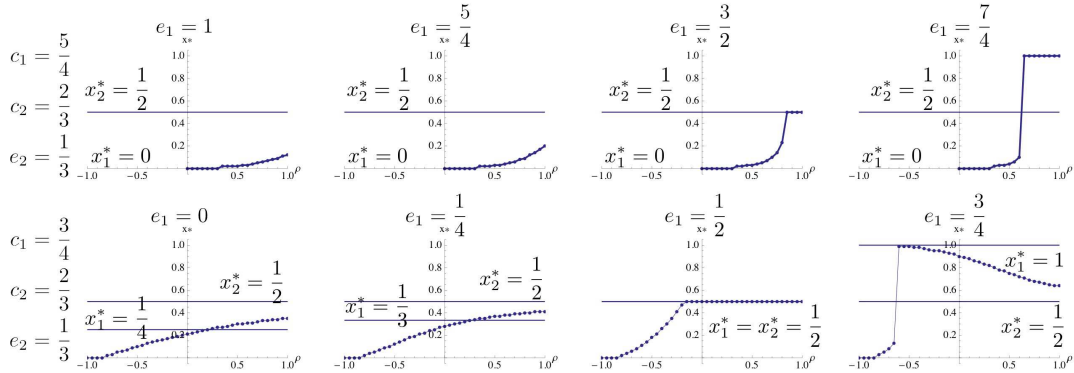}
\caption[caption]{\label{fig:bundling-cont}Impact of value
  correlation $(\rho)$ on bundle adoption 
  $(x^*)$ for different technology combinations (continuous model).}
\end{figure}

\subsection{On the role of correlation (continuous model)}
\label{sec:impact_rho_con}

The discrete affinity model of Section~\ref{sec:disaff} let us
explicitly account for the impact of correlation when bundling
services.  Its relative simplicity, however, raises the question of
whether the findings hold under more general (realistic) assumptions.
An exhaustive assessment is clearly impractical, and we limit
ourselves to the uniform distribution of Section~\ref{sec:conaff} to
offer initial evidence of the ``robustness'' of the results.  Because,
as mentioned in Section~\ref{ssec:uncor}, an analytical investigation
of uniformly distributed affinities under general correlation is
complex, we resort to a numerical approach.  Specifically, we consider
a pair of bundling scenarios similar to those of
\fig{fig:bundling-disc}, and numerically evaluate the bundle's 
adoption for different values of $\rho$.  The results are reported in
\fig{fig:bundling-cont}, which largely mirrors \fig{fig:bundling-disc}
with some differences as we briefly review.

The two sets of plots in \fig{fig:bundling-cont} clearly display the
presence of a threshold effect, where correlation $(\rho)$ needs to
exceed a certain minimum value before bundling becomes beneficial.
This is particularly so when combining two heterogeneous services; a
high-cost, high-externality one with a low-cost, low-externality one
(top set of plots).  Unlike the corresponding scenario in
\fig{fig:bundling-disc}, the jump in the bundle's adoption that occurs
after crossing the threshold is not followed by a decline in adoption
as $\rho$ further increases.  This is likely because under a uniform
distribution, the relative value of the externality after crossing the
threshold is sufficient to prevent declines in adoption for larger
values of $\rho$, \ie a scenario similar to that of the top right-most
plot of \fig{fig:bundling-disc}.  The potentially negative impact of
further increases in $\rho$ (beyond the threshold) is, however, seen in
the lower set of plots of \fig{fig:bundling-cont}.  In particular, the
right-most plot clearly displays that while $\rho$ needs to exceed a
threshold value of about $-0.4$ for the bundle to jump to full
adoption $(x^*=1)$, increasing $\rho$ beyond this value
results in progressively lower adoptions levels.

We note that the last scenario is an instance of a $SW$ rather than a
true $WW$ scenario, and under continuous affinity distributions we did
not identify instances of true $WW$ outcomes that exhibited a decline
in adoption as $\rho$ increased beyond its ``threshold'' value.  This
is not unexpected, since as mentioned earlier, the shape of the joint
distribution and not just the correlation coefficient is expected to
affect the outcome.  Hence, as distributions change, so will the exact
configurations under which different effects arise as well as their
magnitude.  However, we believe that the general insight articulated
in the previous section still holds, namely, the presence of a minimum
correlation value to realize the critical mass of early adopters that
a bundle with a high externality factor needs to succeed, and the fact
that increasing $\rho$ beyond this value can narrow the bundle's
ultimate user base and, therefore, lower overall adoption unless its
externality factor is large enough.

\subsection{Summary}

Based on the above results, the following \emph{bundling guidelines}
emerge to assist in identifying services, which, if bundled, can result
in $WW$ outcomes:

\noindent
{\bf Bundling guidelines}: When bundling network services so as to
bolster their adoption, it is best to choose services that are
\begin{enumerate*}
\item 
  \begin{enumerate*}
  \item either heterogeneous in their cost-benefit structure, \ie low cost \&
    externality vs.~high cost \& externality,
 \item or of average cost and externality,
  \end{enumerate*}
\item and sufficiently correlated in how users value them, but not
  too much.
\end{enumerate*}
The first guideline highlights that successful bundling outcomes
require both a high overall externality factor, and a low enough cost
to allow the creation of a sufficient critical mass of early adopters
so that the value of the high externality can start being realized.
The second guideline states that creating a sufficient critical mass
of early adopters requires a certain minimum level of correlation in
how users value the bundled services, but that once this level has
been reached there is no benefit in selecting services that exhibit
higher levels of correlation (and there could be disadvantages).

\section{Conclusion}
\label{sec:con}

The paper presents an initial investigation aimed at developing a
better understanding of when bundling networking technologies or
services can be beneficial, \ie result in higher adoption levels than
when they are offered separately.

The question is of relevance in many practical settings as networking
technologies commonly face early adoption hurdles until they acquire a
large enough user-base to start delivering sufficient value.  Bundling
technologies can offer an effective solution to overcome those early
adoption challenges, but it is often hard to predict whether it will
succeed or not.  Of particular importance in determining the outcome
is correlation in how users value the individual technologies being
bundled.   The paper proposes simple models that can help explore this
question in a principled manner, and illustrates the type of insight
they provide through a few simple examples.

There are obviously many extensions that are desirable to the
basic models described in the paper and in their ability to
realistically capture how technologies interact, \eg the extent to
which they are complements or substitutes, or whether they exhibit
economies of scope.  The methodology outlined in the paper, however,
offers a first step towards developing a fundamental understanding of
the role that bundling can play in helping network technologies
overcome initial adoption hurdles.

\section{Acknowledgements}
The authors wish to acknowledge the support of the National Science Foundation  through award CNS-1116039.  The contents of this paper belong to the authors and do not necessarily reflect the views of the National Science Foundation.  The authors also wish to acknowledge helpful discussions with Kartik Hosanagar.

\bibliographystyle{ACM-Reference-Format-Journals}
\bibliography{acmtoit2013}

\clearpage
\appendix
\section*{APPENDIX}
\setcounter{section}{1}
\subsection{Generating and characterizing correlated random variables}
\label{sec:gen_corr_var}

The generation of a pair of correlated uniform random variables
$(U_1,U_2)$ is based on the following proposition.
\begin{proposition}[\cite{Pea1907,HotPab1936}] 
\label{prop:pearson}
Let $(Z_1,Z_2)$ be a pair of independent standard normal RVs and fix $\rho \in [-1,1]$.  Then
\begin{equation}
\left[ \begin{array}{c} Y_1 \\ Y_2 \end{array} \right] = \left[ \begin{array}{cc} Z_1 & Z_2 \end{array} \right] \left[ \begin{array}{cc} 1 & \rho \\ 0 & \sqrt{1-\rho^2} \end{array} \right] = \left[ \begin{array}{c} Z_1 \\ \rho Z_1 + \sqrt{1-\rho^2} Z_2 \end{array} \right]
\label{eq:c}
\end{equation}
are standard normal RVs with correlation $\rho$.  Further, $U = (U_1,U_2)$ with $U_i = F_Z(Y_i)$ for $i \in \{1,2\}$ are uniform RVs with correlation 
\begin{equation}
\rho_U = \frac{6}{\pi} \sin^{-1} \left( \frac{\rho}{2} \right) \in [-1,1].
\label{eq:d}
\end{equation}
\end{proposition}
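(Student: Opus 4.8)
The plan is to verify the three assertions in turn, treating the computation of $\rho_U$ as the one substantive step. First I would confirm that $(Y_1,Y_2)$ is a centered bivariate Gaussian with the stated correlation. Since $Y_1 = Z_1$ and $Y_2 = \rho Z_1 + \sqrt{1-\rho^2}\,Z_2$ are linear combinations of the independent standard normals $(Z_1,Z_2)$, the pair $(Y_1,Y_2)$ is jointly Gaussian with mean zero; a one-line variance computation gives $\mathrm{Var}(Y_2) = \rho^2 + (1-\rho^2) = 1$, so each $Y_i$ is standard normal, and $\mathrm{Cov}(Y_1,Y_2) = \rho\,\mathrm{Var}(Z_1) + \sqrt{1-\rho^2}\,\mathrm{Cov}(Z_1,Z_2) = \rho$, whence the correlation is $\rho$. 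That $U_i = F_Z(Y_i)$ is uniform on $[0,1]$ is then immediate from the probability integral transform, using that $F_Z$ is continuous and strictly increasing.

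The heart of the proof is the evaluation of $\rho_U = \mathrm{Corr}(U_1,U_2)$. Because each $U_i$ is uniform, $\Ebb[U_i] = 1/2$ and $\mathrm{Var}(U_i) = 1/12$, so it suffices to compute $\Ebb[U_1 U_2]$ via $\rho_U = 12\bigl(\Ebb[U_1 U_2] - 1/4\bigr)$. The key device is to write $F_Z(y) = \Pbb(W \le y)$ for a fresh standard normal $W$: introducing independent standard normals $W_1,W_2$, themselves independent of $(Y_1,Y_2)$, I would express $U_i = \Pbb(W_i \le Y_i \mid Y_i)$ and hence $\Ebb[U_1 U_2] = \Pbb(W_1 - Y_1 \le 0,\; W_2 - Y_2 \le 0)$.

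Next I would identify the law of $(W_1 - Y_1, W_2 - Y_2)$. Being a linear image of the jointly Gaussian vector $(W_1,W_2,Y_1,Y_2)$, it is a centered bivariate Gaussian; each coordinate has variance $2$, and the cross-covariance collapses to $\mathrm{Cov}(Y_1,Y_2) = \rho$ because the $W$'s are independent of everything, so the correlation is $\rho/2$. Applying Sheppard's orthant-probability formula, which states that a centered bivariate normal with correlation $r$ satisfies $\Pbb(\,\cdot \le 0,\; \cdot \le 0) = 1/4 + (1/2\pi)\sin^{-1}(r)$, with $r = \rho/2$ gives $\Ebb[U_1 U_2] = 1/4 + (1/2\pi)\sin^{-1}(\rho/2)$; substituting into $\rho_U = 12(\Ebb[U_1 U_2] - 1/4)$ yields $\rho_U = (6/\pi)\sin^{-1}(\rho/2)$. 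The endpoints check out, since $\rho = \pm 1$ force $U_2 = U_1$ and $U_2 = 1 - U_1$ respectively, giving $\rho_U = \pm 1$, and monotonicity confirms that $\rho_U$ ranges over $[-1,1]$.

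The main obstacle is Sheppard's formula itself. I would either cite it (it is classical and consistent with the references attached to the statement) or prove it by the standard route: write the orthant probability as a function $g(r)$ of the correlation, differentiate under the integral sign so that $g'(r) = \frac{1}{2\pi\sqrt{1-r^2}}$ emerges from the bivariate normal density, and integrate from $g(0) = 1/4$. Everything else is routine Gaussian bookkeeping; the only point requiring care is justifying the conditioning and independence step that converts $\Ebb[U_1 U_2]$ into an orthant probability.
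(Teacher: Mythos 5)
Your proposal is correct. Note that the paper itself offers no proof of this proposition --- it is stated as a classical result with citations to Pearson and Hotelling--Pabst --- so there is nothing internal to compare against; your argument is a sound, self-contained reconstruction of the standard derivation. The three steps all check out: joint Gaussianity and the variance/covariance bookkeeping give $\mathrm{Corr}(Y_1,Y_2)=\rho$; the probability integral transform gives uniformity of $U_i=F_Z(Y_i)$; and the reduction of $\Ebb[U_1U_2]$ to the orthant probability $\Pbb(W_1-Y_1\le 0,\,W_2-Y_2\le 0)$ for a centered bivariate normal with correlation $\rho/2$, followed by Sheppard's formula, yields $\rho_U = 12\left(\tfrac{1}{2\pi}\sin^{-1}(\rho/2)\right)=\tfrac{6}{\pi}\sin^{-1}(\rho/2)$, exactly the stated ``grade correlation'' identity. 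The one step you flag as needing care --- passing from $\Ebb\!\left[\Pbb(W_1\le Y_1\mid Y_1)\,\Pbb(W_2\le Y_2\mid Y_2)\right]$ to the joint orthant probability --- is justified precisely because $W_1,W_2$ are independent of each other and of $(Y_1,Y_2)$, and your endpoint checks at $\rho=\pm 1$ are consistent with the degenerate cases $U_2=U_1$ and $U_2=1-U_1$ used elsewhere in the paper.
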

\begin{remark}
Selecting $\rho = 2 \sin (\pi \rho^*/6)$ for a target correlation $\rho^*$ ensures $\rho_U = \rho^*$.  In what follows we will work with $\rho$ as the correlation parameter, even though $\rho_U$ is the actual correlation\footnote{A further justification for this equivocation is the fact that $\rho_U(\rho) \approx \rho$.  In fact $\max |\rho_U(\rho)-\rho|$ over $\rho \in [-1,1]$ occurs at $\rho_c = \pm \sqrt{4 \pi^2-36}/\pi \approx \pm 0.593664$ where $\rho_U(\rho_c) = \pm \frac{6}{\pi} \sin^{-1} \left(\sqrt{\pi^2-9}/\pi\right) \approx \pm 0.575581$, so the maximum deviation of $\rho_U(\rho)$ from $\rho$ is $|\rho_U(\rho_c) - \rho_c| \approx 0.01808$.}.
\end{remark}
\begin{remark}
Observe $A = \left[ \begin{array}{cc} 1 & \rho \\ 0 & \sqrt{1-\rho^2} \end{array} \right]$ in Prop.\ \ref{prop:pearson} is the Cholesky decomposition of the target correlation matrix $\Sigma = A^{\Tsf} A = \left[ \begin{array}{cc} 1 & \rho \\ \rho & 1 \end{array} \right]$.  
\end{remark}

From Prop.\ \ref{prop:pearson} it is immediate to obtain the joint CDF on $(U_1,U_2)$ in terms of the correlation $\rho$, and from there the joint PDF.

\begin{proposition}
\label{prop:jointcdfpdfcorrcont}
The joint CDF and joint PDF of $(U_1,U_2)$ in Prop.\ \ref{prop:pearson} at $(u_1,u_2)$ are:
\begin{eqnarray}
F_{U_1,U_2}(u_1,u_2) &=& \int_0^{u_1} F_Z \left( \frac{F_Z^{-1}(u_2) - \rho F_Z^{-1}(v_1)}{\sqrt{1-\rho^2}} \right) \drm v_1 \label{eq:jointcdfcorrcont} \\
f_{U_1,U_2}(u_1,u_2) &=& \frac{1}{\sqrt{1-\rho^2} f_Z(F_Z^{-1}(u_2))} f_Z \left( \frac{F_Z^{-1}(u_2) - \rho F_Z^{-1}(u_1)}{\sqrt{1-\rho^2}} \right) \label{eq:jointpdfcorrcont}.
\end{eqnarray}
\end{proposition}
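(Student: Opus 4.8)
The plan is to obtain the joint CDF directly from the construction in Prop.~\ref{prop:pearson} and then recover the PDF by differentiation. First I would exploit the fact that $F_Z$ is strictly increasing, so that $U_i = F_Z(Y_i) \leq u_i$ if and only if $Y_i \leq F_Z^{-1}(u_i)$. This rewrites the event $\{U_1 \leq u_1,\, U_2 \leq u_2\}$ as $\{Y_1 \leq F_Z^{-1}(u_1),\, Y_2 \leq F_Z^{-1}(u_2)\}$, and substituting $Y_1 = Z_1$ and $Y_2 = \rho Z_1 + \sqrt{1-\rho^2}\,Z_2$ from \eqref{eq:c} expresses the probability entirely in terms of the independent pair $(Z_1,Z_2)$.

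Next I would condition on $Z_1 = z_1$. Since $Z_2$ is independent of $Z_1$ and standard normal, the conditional probability that $\rho z_1 + \sqrt{1-\rho^2}\,Z_2 \leq F_Z^{-1}(u_2)$ is exactly $F_Z\!\left( \frac{F_Z^{-1}(u_2) - \rho z_1}{\sqrt{1-\rho^2}} \right)$. Integrating this against the density $f_Z(z_1)$ over $z_1 \in (-\infty,\, F_Z^{-1}(u_1)]$ yields an integral representation of $F_{U_1,U_2}(u_1,u_2)$. The change of variables $v_1 = F_Z(z_1)$, with $\drm v_1 = f_Z(z_1)\,\drm z_1$ and limits $z_1 = -\infty \mapsto v_1 = 0$ and $z_1 = F_Z^{-1}(u_1) \mapsto v_1 = u_1$, converts this into the stated form \eqref{eq:jointcdfcorrcont}.

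For the PDF I would differentiate the CDF once in each argument. Differentiating in $u_1$ is immediate by the fundamental theorem of calculus and collapses the integral to its integrand evaluated at $v_1 = u_1$, namely $F_Z\!\left( \frac{F_Z^{-1}(u_2) - \rho F_Z^{-1}(u_1)}{\sqrt{1-\rho^2}} \right)$. Differentiating this in $u_2$ by the chain rule produces $f_Z$ of the same argument times the partial derivative of that argument in $u_2$. The one step that requires care, and the only place the computation is not pure bookkeeping, is this last derivative: it invokes the inverse-function rule $\frac{\drm}{\drm u}F_Z^{-1}(u) = 1/f_Z(F_Z^{-1}(u))$ applied to the normal CDF. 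This factor is precisely what generates the $f_Z(F_Z^{-1}(u_2))$ in the denominator of \eqref{eq:jointpdfcorrcont}, and combining it with the $1/\sqrt{1-\rho^2}$ already present in the argument completes the identification. I expect no genuine obstacle here, since $F_Z$ is smooth and strictly increasing with strictly positive density, so the inverse and its derivative are well defined throughout $(0,1)$; the result is essentially the Gaussian-copula density written out for the bivariate case.
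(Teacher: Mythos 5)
Your proposal is correct and follows essentially the same route as the paper's proof: rewrite the event via monotonicity of $F_Z$, condition on $Z_1=z_1$ to get the conditional probability $F_Z\bigl((F_Z^{-1}(u_2)-\rho z_1)/\sqrt{1-\rho^2}\bigr)$, change variables $v_1=F_Z(z_1)$, and then differentiate first in $u_1$ and then in $u_2$ using the inverse function theorem. No substantive differences.
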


\begin{proof}
The joint CDF of $(U_1,U_2)$ at $(u_1,u_2)$ is:
\begin{eqnarray}
F_{U_1,U_2}(u_1,u_2) &=& \Pbb(U_1 \leq u_1, U_2 \leq u_2) \nonumber \\
&=& \Pbb(F_Z(Z_1) \leq u_1, F_Z(\rho Z_1 + \sqrt{1-\rho^2} Z_2) \leq u_2) \nonumber \\
&=& \Pbb(Z_1 \leq F_Z^{-1}(u_1), \rho Z_1 + \sqrt{1-\rho^2} Z_2 \leq F_Z^{-1}(u_2)) \nonumber \\
&=& \int_{-\infty}^{F_Z^{-1}(u_1)} \Pbb(Z_1 \leq F_Z^{-1}(u_1), \rho Z_1 + \sqrt{1-\rho^2} Z_2 \leq F_Z^{-1}(u_2) | Z_1 = z_1) f_Z(z_1) \drm z_1 \nonumber \\
&=& \int_{-\infty}^{F_Z^{-1}(u_1)} \Pbb(\rho z_1 + \sqrt{1-\rho^2} Z_2 \leq F_Z^{-1}(u_2)) f_Z(z_1) \drm z_1 \nonumber \\
&=& \int_{-\infty}^{F_Z^{-1}(u_1)} F_Z \left( \frac{F_Z^{-1}(u_2) - \rho z_1}{\sqrt{1-\rho^2}} \right) f_Z(z_1) \drm z_1
\end{eqnarray}
Change of variables from $z_1$ to $v_1 = F_Z(z_1)$ gives \Eqref{eq:jointcdfcorrcont}.  Differentiation w.r.t.\ $(u_1,u_2)$ gives
\begin{eqnarray}
\frac{\partial}{\partial u_1} F_{U_1,U_2}(u_1,u_2) 
&=& F_Z \left( \frac{F_Z^{-1}(u_2) - \rho F_Z^{-1}(u_1)}{\sqrt{1-\rho^2}} \right) \\
f_{U_1,U_2}(u_1,u_2) 
&=& \frac{1}{\sqrt{1-\rho^2}} f_Z \left( \frac{F_Z^{-1}(u_2) - \rho F_Z^{-1}(u_1)}{\sqrt{1-\rho^2}} \right) \frac{\drm}{\drm u_2} F_Z^{-1}(u_2) \nonumber 
\end{eqnarray}
Applying the inverse function theorem gives the joint PDF in \Eqref{eq:jointpdfcorrcont}. 
\end{proof}

The joint PDF $f_{U_1,U_2}(u_1,u_2)$ is illustrated in \fig{fig:jointpdfcorrcont} for $\rho \in \pm \frac{1}{2}$.  The following proposition shows that this joint distribution recovers the distributions of perfectly negatively correlated, independent, and perfectly positively correlated uniform random variables as $\rho \to \{-1,0,+1\}$, respectively.    

\begin{figure}[!ht]
\centering
\includegraphics[width=4.0in]{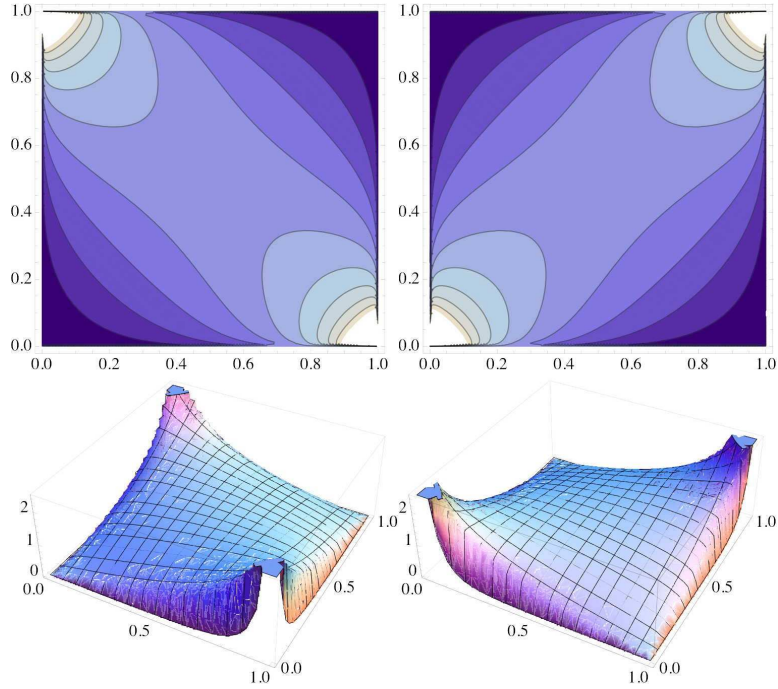}
\caption{Contour plots (top) and 3-D plots (bottom) of the joint PDF on service affinities $f_{U_1,U_2}(u_1,u_2)$ in Prop.\ \ref{prop:jointcdfpdfcorrcont} for $\rho = -1/2$ (left) and $\rho = 1/2$ (right).}
\label{fig:jointpdfcorrcont}
\end{figure}

\begin{proposition}
\label{prop:jointcdflimit}
The limits of $F_{U_1,U_2}(u_1,u_2)$ for $(u_1,u_2) \in [0,1]^2$ in \Eqref{eq:jointpdfcorrcont} as $\rho \to \{-1,0,+1\}$ are
\begin{eqnarray}
\lim_{\rho \to -1} F_{U_1,U_2}(u_1,u_2) &=& \max\{u_1+u_2-1,0\} \nonumber \\ 
\lim_{\rho \to 0} F_{U_1,U_2}(u_1,u_2) &=& u_1 u_2 \nonumber \\
\lim_{\rho \to 1} F_{U_1,U_2}(u_1,u_2) &=& \min\{u_1,u_2\}
\end{eqnarray}
corresponding to $U_1 + U_2 = 1$ (a.s.), $(U_1,U_2)$ independent, and $U_1 = U_2$ (a.s.), respectively.  
\end{proposition}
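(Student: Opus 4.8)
The plan is to work directly from the integral representation \eqref{eq:jointcdfcorrcont} of the joint CDF and to obtain each limit by passing $\rho$ to its limiting value inside the integral. The independent case $\rho \to 0$ is immediate: setting $\rho = 0$ collapses the integrand to $F_Z(F_Z^{-1}(u_2)) = u_2$, so that $\int_0^{u_1} u_2 \, \drm v_1 = u_1 u_2$, the independence copula. The two extreme cases are the substantive ones, and for both I would (i) identify the pointwise limit in $v_1$ of the integrand, (ii) justify interchanging limit and integral, and (iii) evaluate the resulting integral of an indicator.

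For the comonotonic limit $\rho \to 1$, I would fix $v_1 \in (0,1)$ and examine the argument $(F_Z^{-1}(u_2) - \rho F_Z^{-1}(v_1))/\sqrt{1-\rho^2}$. As $\rho \to 1^-$ the denominator vanishes while the numerator tends to $F_Z^{-1}(u_2) - F_Z^{-1}(v_1)$; since $F_Z^{-1}$ is strictly increasing, this numerator is positive when $v_1 < u_2$ and negative when $v_1 > u_2$, driving the argument to $+\infty$ (resp.\ $-\infty$). Hence the integrand converges to $\mathbf{1}_{v_1 < u_2}$ for every $v_1 \neq u_2$, and $\int_0^{u_1} \mathbf{1}_{v_1 < u_2} \, \drm v_1 = \min\{u_1,u_2\}$. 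The case $\rho \to -1$ is handled identically: the limiting numerator is now $F_Z^{-1}(u_2) + F_Z^{-1}(v_1)$, and invoking the normal symmetry $F_Z^{-1}(1-v_1) = -F_Z^{-1}(v_1)$ shows this is positive precisely when $v_1 > 1 - u_2$, so the integrand converges to $\mathbf{1}_{v_1 > 1-u_2}$ and $\int_0^{u_1} \mathbf{1}_{v_1 > 1-u_2} \, \drm v_1 = \max\{u_1+u_2-1,0\}$ (using $1-u_2 \ge 0$ on $[0,1]^2$).

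To make the interchange rigorous I would invoke dominated convergence: the integrand is a value of a CDF and hence bounded by $1$ uniformly in $\rho$ and $v_1$, which dominates on the finite-measure interval $(0,u_1)$. The identification of each limit with the claimed dependence structure is then the standard Fr\'echet--Hoeffding picture for uniform marginals: $\min\{u_1,u_2\}$ is the law of $(U_1,U_1)$, $u_1 u_2$ is the independence product, and $\max\{u_1+u_2-1,0\}$ is the law of $(U_1,1-U_1)$, which one checks by the elementary computation $\Pbb(U_1 \le u_1, 1-U_1 \le u_2) = \Pbb(1-u_2 \le U_1 \le u_1)$.

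The main obstacle I anticipate is purely the bookkeeping at the indeterminate point $v_1 = u_2$ (resp.\ $v_1 = 1-u_2$), where the limiting numerator vanishes and the ratio is a $0/0$ form; one must note that this single point carries zero Lebesgue measure and so does not affect the integral. Beyond that, the sign analysis of the limiting numerator --- and, in the $\rho \to -1$ case, the correct use of the symmetry $F_Z^{-1}(1-v_1) = -F_Z^{-1}(v_1)$ --- is the only place where a sign slip could flip an inequality; everything else reduces to elementary one-dimensional integration of an indicator.
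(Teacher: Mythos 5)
Your proof is correct and follows essentially the same route as the paper's: pass the limit inside the integral over $(0,u_1)$, determine the sign of the limiting numerator of the $F_Z(\cdot)$ argument (using $F_Z^{-1}(1-v_1)=-F_Z^{-1}(v_1)$ in the $\rho\to-1$ case), and integrate the resulting indicator. Your explicit appeal to dominated convergence and your remark about the measure-zero set where the numerator vanishes are in fact slightly more careful than the paper's justification, which only cites finite support and continuity of the integrand.
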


\begin{proof}
Since the integral is over a finite support and the integrand is continuous:
\begin{equation}
\label{eq:conaffasympf}
\lim_{\rho \to \pm 1} F_{U_1,U_2}(u_1,u_2) = \int_0^{u_1} F_Z \left( \lim_{\rho \to \pm 1} \frac{F_Z^{-1}(u_2) - \rho F_Z^{-1}(v_1)}{\sqrt{1-\rho^2}} \right) \drm v_1.
\end{equation}
It follows that $F_{U_1,U_2}(u_1,u_2) \to u_1 u_2$ as $\rho \to 0$.  Next, as $\rho \to 1$ observe as $u_2 \gtrless v_1$ the function inside $F_Z(\cdot)$ in \Eqref{eq:conaffasympf} goes to $\pm \infty$, respectively.  Thus:
\begin{eqnarray}
u_1 < u_2 & \Rightarrow & \lim_{\rho \to 1} F_{U_1,U_2}(u_1,u_2) = \int_0^{u_1} F_Z(\infty) \drm v_1 = u_1 \nonumber \\
u_1 > u_2 & \Rightarrow & \lim_{\rho \to 1} F_{U_1,U_2}(u_1,u_2) = \int_0^{u_2} F_Z(\infty) \drm v_1 + \int_{u_2}^{u_1} F_Z(-\infty) \drm v_1 = u_2
\end{eqnarray}
and so $\lim_{\rho \to 1} F_{U_1,U_2}(u_1,u_2) = \min\{u_1,u_2\}$.  Finally, as $\rho \to -1$ observe as $F_Z^{-1}(u_2) + F_Z^{-1}(v_1) \gtrless 0$ the function inside $F_Z(\cdot)$ in \Eqref{eq:conaffasympf} goes to $\pm \infty$, respectively.  Observe
\begin{equation}
F_Z^{-1}(u_2) + F_Z^{-1}(v_1) \gtrless 0 ~ \Leftrightarrow ~ F_Z^{-1}(u_2)  \gtrless F_Z^{-1}(1-v_1) ~ \Leftrightarrow ~ v_1 + u_2 \gtrless 1.
\end{equation}
Thus:
\begin{eqnarray}
u_1 + u_2 \leq 1 & \Rightarrow & \lim_{\rho \to -1} F_{U_1,U_2}(u_1,u_2) = \int_0^{u_1} F_Z(-\infty) \drm v_1 = 0 \\
u_1 + u_2 \geq 1 & \Rightarrow & \lim_{\rho \to -1} F_{U_1,U_2}(u_1,u_2) = \int_0^{1-u_2} F_Z(-\infty) \drm v_1 + \int_{1-u_2}^{u_1} F_Z(\infty) \drm v_1 = u_1+u_2-1 \nonumber 
\end{eqnarray}
and so $\lim_{\rho \to 1} F_{U_1,U_2}(u_1,u_2) = \max\{u_1+u_2-1,0\}$.
\end{proof}

The following two functions are central to the subsequent proposition. 
\begin{eqnarray}
\psi_{u,\rho}(v) & \equiv & F_Z \left( \frac{F_Z^{-1}(u-v) - \rho F_Z^{-1}(v)}{\sqrt{1-\rho^2}} \right) \label{eq:conaffbuncdfarg} \\
\phi_{u,\rho}(v) & \equiv & f_Z \left( \frac{F_Z^{-1}(u-v) - \rho F_Z^{-1}(v)}{\sqrt{1-\rho^2}} \right) \frac{1}{f_Z(F_Z^{-1}(u-v))} \label{eq:conaffbunpdfarg}
\end{eqnarray}
for $i)$ $u \in (0,2]$, $ii)$ $v \in [0,u]$ when $u \in (0,1]$ and $v \in [u-1,1]$ when $u \in (1,2]$, and $iii)$ $\rho \in [-1,1]$.  
\begin{proposition}
\label{prop:conaffsumpdfcdf}
The aggregate affinity under bundling $U = U_1 + U_2$ has CDF
\begin{equation}
F_U(u) = \left\{ \begin{array}{ll}
\displaystyle\int_0^u \psi_{u,\rho}(v) \drm v, \; & u \in (0,1] \\
\displaystyle\int_{u-1}^1 \psi_{u,\rho}(v) \drm v + u-1, \; & u \in [1,2]
\end{array} \right.
\label{eq:conaffsumcdf}
\end{equation}
and PDF
\begin{equation}
f_U(u) = \left\{ \begin{array}{ll}
\frac{1}{\sqrt{1-\rho^2}} \displaystyle\int_0^u \phi_{u,\rho}(v) \drm v, \; & u \in (0,1] \\
\frac{1}{\sqrt{1-\rho^2}} \displaystyle\int_{u-1}^1 \phi_{u,\rho}(v) \drm v, \; & u \in [1,2]
\end{array} \right.
\label{eq:conaffsumpdf}
\end{equation}
for $\psi_{u,\rho}(v)$ in \Eqref{eq:conaffbuncdfarg} and $\phi_{u,\rho}(v)$ in \Eqref{eq:conaffbunpdfarg}.  Further, $F_U(1) =1/2$ for all $\rho$.  
\end{proposition}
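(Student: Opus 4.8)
The plan is to obtain both the CDF and the PDF of $U = U_1 + U_2$ by specializing the joint law of Prop.~\ref{prop:jointcdfpdfcorrcont}, using that the uniform marginals have density $1$ on $[0,1]$ and that $U$ is supported on $[0,2]$.

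First I would compute $F_U(u)$ by conditioning on $U_1 = v$. The proof of Prop.~\ref{prop:jointcdfpdfcorrcont} already shows $\frac{\partial}{\partial u_1} F_{U_1,U_2}(u_1,u_2) = F_Z\!\left(\frac{F_Z^{-1}(u_2) - \rho F_Z^{-1}(u_1)}{\sqrt{1-\rho^2}}\right)$, which equals $\Pbb(U_2 \le u_2 \mid U_1 = u_1)$ because the $U_1$-marginal is uniform; evaluated at $(u_1,u_2) = (v, u-v)$ this is exactly $\psi_{u,\rho}(v)$. Writing $F_U(u) = \int_0^1 \Pbb(U_2 \le u - v \mid U_1 = v)\,\drm v$ and splitting the range of $v$ by whether $u-v$ falls below $0$, inside $[0,1]$, or above $1$ gives the two branches: for $u \in (0,1]$ only $v \in [0,u]$ contributes (the conditional probability vanishes for $v > u$), yielding $\int_0^u \psi_{u,\rho}(v)\,\drm v$; for $u \in [1,2]$ the part $v \in [0,u-1]$ contributes conditional probability $1$ (there $u - v \ge 1$) and the part $v \in [u-1,1]$ contributes $\psi_{u,\rho}(v)$, yielding $(u-1) + \int_{u-1}^1 \psi_{u,\rho}(v)\,\drm v$.

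Next I would read off the density most directly from the convolution identity $f_U(u) = \int f_{U_1,U_2}(v, u-v)\,\drm v$ taken over the same $v$-ranges: substituting $(u_1,u_2) = (v, u-v)$ into the joint PDF of Prop.~\ref{prop:jointcdfpdfcorrcont} reproduces precisely $\frac{1}{\sqrt{1-\rho^2}}\,\phi_{u,\rho}(v)$, so both branches of $f_U$ follow at once. As a consistency check one can instead differentiate $F_U$ by Leibniz's rule: the interior term yields $\frac{1}{\sqrt{1-\rho^2}}\,\phi_{u,\rho}(v)$ through the inverse-function theorem $\frac{\drm}{\drm u}F_Z^{-1}(u-v) = 1/f_Z(F_Z^{-1}(u-v))$, while the boundary contributions cancel --- $\psi_{u,\rho}(u) = F_Z(-\infty) = 0$ since $F_Z^{-1}(0) = -\infty$, and in the upper branch $\psi_{u,\rho}(u-1) = F_Z(+\infty) = 1$ exactly cancels the $+1$ from differentiating $(u-1)$.

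Finally, for $F_U(1) = 1/2$ I would argue by symmetry rather than evaluate the integral: since $(Z_1,Z_2) \overset{d}{=} (-Z_1,-Z_2)$ and $F_Z(-z) = 1 - F_Z(z)$, the construction of Prop.~\ref{prop:pearson} gives $(U_1,U_2) \overset{d}{=} (1-U_1,\,1-U_2)$, hence $U \overset{d}{=} 2 - U$; as $U$ has a density and therefore no atom at $1$, this forces $F_U(1) = 1 - F_U(1) = 1/2$. The main obstacle is less any single computation than the bookkeeping of the piecewise split: one must track where $u-v$ leaves $[0,1]$ and verify that the boundary evaluations $F_Z^{-1}(0) = -\infty$ and $F_Z^{-1}(1) = +\infty$ make the stray Leibniz terms collapse exactly, which is precisely what welds the two branches into the stated closed form.
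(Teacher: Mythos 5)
Your proposal is correct, and it diverges from the paper's proof in two of its three parts in ways worth noting. For the CDF the two arguments are essentially the same: the paper conditions on $Z_1=z_1$, splits the integral at $F_Z^{-1}(u)$ (resp.\ $F_Z^{-1}(u-1)$), and then changes variables to $v=F_Z(z_1)$, which is exactly your conditioning on $U_1=v$ with the change of variables performed up front. For the PDF, the paper's only route is the one you relegate to a consistency check --- Leibniz differentiation of $F_U$ plus the inverse function theorem, with the boundary terms vanishing for precisely the reasons you give ($F_Z^{-1}(0)=-\infty$ kills the lower-branch boundary term, and $F_Z(+\infty)=1$ cancels the $+1$ in the upper branch); your primary route via the convolution identity $f_U(u)=\int f_{U_1,U_2}(v,u-v)\,\mathrm{d}v$ is a cleaner and equally valid derivation that reads $\frac{1}{\sqrt{1-\rho^2}}\phi_{u,\rho}(v)$ directly off the joint density of Prop.~\ref{prop:jointcdfpdfcorrcont} without any boundary bookkeeping. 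For $F_U(1)=1/2$ the approaches genuinely differ: the paper reduces the integral to $g_1(a)=\int F_Z(az)f_Z(z)\,\mathrm{d}z$ and shows $g_1'(a)=0$ by oddness of the integrand, whereas you observe that $(Z_1,Z_2)$ and $(-Z_1,-Z_2)$ are equidistributed, hence $U$ and $2-U$ are equidistributed and $F_U(1)=1/2$ in the absence of an atom at $1$. Your symmetry argument is shorter and more conceptual; the paper's calculation has the minor advantage of staying entirely within the integral representation it has just derived. (Both treatments implicitly require $|\rho|<1$ --- at $\rho=-1$ the variable $U$ is degenerate at $1$ and $F_U(1)=1$ --- but your explicit ``no atom at $1$'' caveat is the honest statement of the needed hypothesis.)
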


\begin{proof}[of Prop.\ \ref{prop:conaffsumpdfcdf}]
From \Eqref{eq:c}, the CDF of $U = U_1 + U_2$ in terms of the iid standard normal random variables $(Z_1,Z_2)$ and the correlation parameter $\rho$ is 
\begin{equation}
F_U(u) = \Pbb \left(F_Z(Z_1) + F_Z \left(\rho Z_1 + \sqrt{1-\rho^2} Z_2 \right) \leq u \right).
\end{equation}
For $u \in (0,1]$ condition on $z_1 \in \Rbb$, split the integral at $z_1 = F_Z^{-1}(u)$, and note the event of interest cannot occur for $z_1 > F_Z^{-1}(u)$:
\begin{equation}
F_U(u) = \int_{-\infty}^{F_Z^{-1}(u)} \Pbb(F_Z(Z_1) + F_Z(\rho Z_1 + \sqrt{1-\rho^2} Z_2) \leq u | Z_1 = z_1) f_Z(z_1) \drm z_1 
\end{equation}
Simplification gives the top equation in \Eqref{eq:conaffsumcdf1}.  For $u \in (1,2]$ condition on  $z_1 \in \Rbb$, split the integral at $z_1 = F_Z^{-1}(u-1)$ and notice the event of interest is assured for $z_1 \leq F_Z^{-1}(u-1)$:
\begin{equation}
F_U(u) 
= F_Z(F_Z^{-1}(u-1)) + \int_{F_Z^{-1}(u-1)}^{\infty} \Pbb(F_Z(z_1) + F_Z(\rho z_1 + \sqrt{1-\rho^2} Z_2) \leq u) f_Z(z_1) \drm z_1 
\end{equation}
Simplification gives the bottom equation in \Eqref{eq:conaffsumcdf1}.  
\begin{equation}
F_U(u) = \left\{ \begin{array}{ll}
\displaystyle\int_{-\infty}^{F_Z^{-1}(u)} F_Z \left( \frac{F_Z^{-1}(u-F_Z(z_1)) - \rho z_1}{\sqrt{1-\rho^2}} \right) f_Z(z_1) \drm z_1, \; & u \in (0,1] \\
u-1 + \displaystyle\int_{F_Z^{-1}(u-1)}^{\infty} F_Z \left( \frac{F_Z^{-1}(u-F_Z(z_1)) - \rho z_1}{\sqrt{1-\rho^2}} \right) f_Z(z_1) \drm z_1, \; & u \in [1,2]
\end{array} \right.
\label{eq:conaffsumcdf1}
\end{equation}
Change variables from $z_1$ to $v = F_Z(z_1)$ to obtain \Eqref{eq:conaffsumcdf}.

Apply the Leibniz integral rule to differentiate $F_U(u)$ and apply the inverse function theorem.  For $u \in [0,1]$:
\begin{eqnarray}
f_U(u) 
&=& F_Z \left( \frac{F_Z^{-1}(u-u) - \rho F_Z^{-1}(u)}{\sqrt{1-\rho^2}} \right) 
+ \int_0^u \frac{\drm}{\drm u} F_Z \left( \frac{F_Z^{-1}(u-v) - \rho F_Z^{-1}(v)}{\sqrt{1-\rho^2}} \right) \drm v \nonumber \\
&=& \frac{1}{\sqrt{1-\rho^2}} \int_0^u f_Z \left( \frac{F_Z^{-1}(u-v) - \rho F_Z^{-1}(v)}{\sqrt{1-\rho^2}} \right) \frac{1}{f_Z(F_Z^{-1}(u-v))} \drm v
\end{eqnarray}
Change variables from $z_1$ to $v = F_Z(z_1)$ to obtain the top equation in \Eqref{eq:conaffsumpdf}.  Likewise, for $u \in (1,2]$:
\begin{eqnarray}
f_U(u) 
&=& 1 - F_Z \left( \frac{F_Z^{-1}(u-(u-1)) - \rho F_Z^{-1}(u-1)}{\sqrt{1-\rho^2}} \right) + \nonumber \\
& & \int_{u-1}^1 \frac{\drm}{\drm u}  F_Z \left( \frac{F_Z^{-1}(u-v) - \rho F_Z^{-1}(v)}{\sqrt{1-\rho^2}} \right) \drm v \nonumber \\
&=& \frac{1}{\sqrt{1-\rho^2}} \int_{u-1}^1 f_Z \left( \frac{F_Z^{-1}(u-v) - \rho F_Z^{-1}(v)}{\sqrt{1-\rho^2}} \right) \frac{1}{f_Z(F_Z^{-1}(u-v))} \drm v
\end{eqnarray}
Change variables from $z_1$ to $v = F_Z(z_1)$ to obtain the bottom equation in \Eqref{eq:conaffsumpdf}.

Finally, we show $F_U(1) = 1/2$ for all $\rho$.  Observe $F_Z^{-1}(1-v) = - F_Z^{-1}(v)$ and thus
\begin{equation}
F_U(1) = \int_0^1 F_Z \left( \frac{F_Z^{-1}(1-v) - \rho F_Z^{-1}(v)}{\sqrt{1-\rho^2}} \right) \drm v = \int_0^1 F_Z \left( -F_Z^{-1}(v) \sqrt{\frac{1+\rho}{1-\rho}} \right) \drm v.
\end{equation}
Set $a = -\sqrt{(1+\rho)/(1-\rho)}$ and write the last expression above as 
\begin{equation}
g_1(a) = \int_0^1 F_Z(a F_Z^{-1}(v)) \drm v = \int_{-\infty}^{\infty} F_Z(a z) f_Z(z) \drm z,
\end{equation}
using the change of variable $z = F_Z^{-1}(v)$.  The derivative w.r.t.\ $a$ is
\begin{equation}
g_1'(a) = \int_{-\infty}^{\infty} z f_Z(az) f_Z(z) \drm z = \int_{-\infty}^{\infty} g_2(z,a) \drm z,
\end{equation}
for $g_2(z,a) \equiv  z f_Z(az) f_Z(z)$.  Now observe $g_2(z,a) = - g_2(-z,a)$, i.e., $g_2(z,a)$ is an odd function in $z$ for all $a$, and thus $g_1'(a) = 0$, and $g_1(a)$ is a constant for all $a$.   Using the change of variable $u = F_Z(z)$ at $a=1$ gives
\begin{equation}
g_1(1) = \int_0^1 u \drm u = \frac{1}{2}.
\end{equation}
Thus $F_U(1) = 1/2$ for all $\rho$.
\end{proof}

Representative plots of the CDF and PDF for $U$ in
Prop.\ \ref{prop:conaffsumpdfcdf} are shown in
\fig{fig:sumcdfcorrcont}.

\begin{figure}[!ht]
\centering
\includegraphics[width=2.25in]{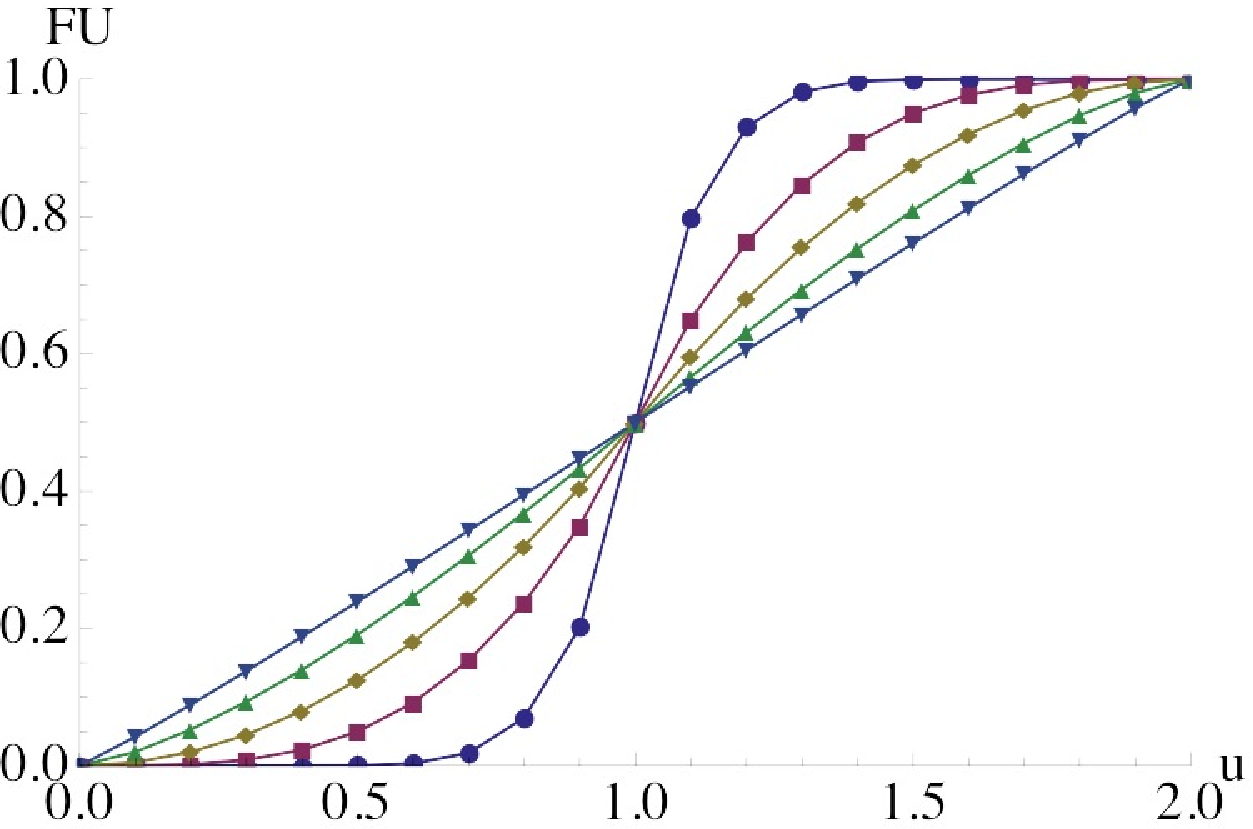}
\includegraphics[width=2.25in]{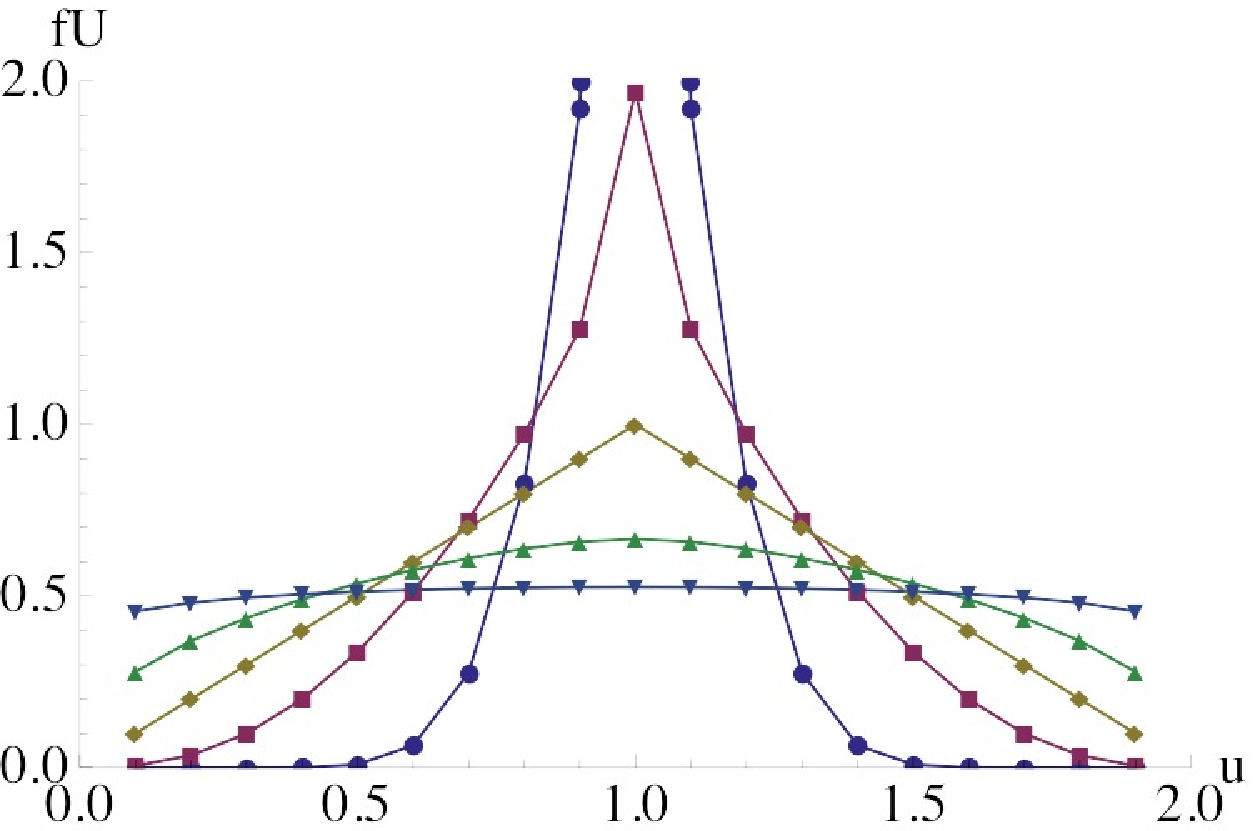}
\caption{The CDF (left) and PDF (right) for the aggregate affinity $U = U_1 + U_2$ from Prop.\ \ref{prop:conaffsumpdfcdf} for $\rho \in \{-0.9, -0.5,0,0.5,0.9\}$.}
\label{fig:sumcdfcorrcont}
\end{figure}

\begin{remark}
\label{rem:copula}
The standard tool to construct a joint distributions with specified
marginals is the copula \cite{Nel2009}.  In our context, a copula
would specify a joint distribution on $[0,1]^2$ with uniform marginals
on $[0,1]$.  Although there are many copulas that handle this quite
easily, our requirements are a bit specific in that we desire $i)$ to
directly parameterize the correlation of the joint distribution, and
$ii)$ to have a ``simple'' distribution for the sum $U = U_1+U_2$.
Although the construction we have employed falls short of this second
objective in that $F_U$ is expressible only in terms of an integral,
nonetheless our preliminary investigation into copulas has not
identified a candidate family of copulas meeting both objectives.   
\end{remark}

\begin{remark}
\label{rem:disbncorr}
Correlation is not in general a sufficient parameter to completely
capture the dependence of the adoption level on the joint
distribution.  In fact we expect that the adoption levels of two joint
distributions on $[0,1]^2$ with uniform marginals and common
correlation may have distinct adoption levels, precisely because the
solution of $h(x)=x$ depends upon the distribution of the aggregate
affinity, $F_U$.  Nonetheless, we view the correlation parameter as an
insightful knob to vary in order to highlight the fact that the
adoption level is quite sensitive to the joint distribution of the
affinities.   
\end{remark}

\subsection{Separate adoption equilibria under uniformly distributed
  user affinities}

Proposition~\ref{prop:conaffsepoffequ} characterized the possible
equilibria for separate service offerings when the user service
affinities are uniform random variables.
The results are illustrated in \fig{fig:conaffsep1}.
\begin{figure}[!ht]
\centering
\includegraphics[width=\textwidth]{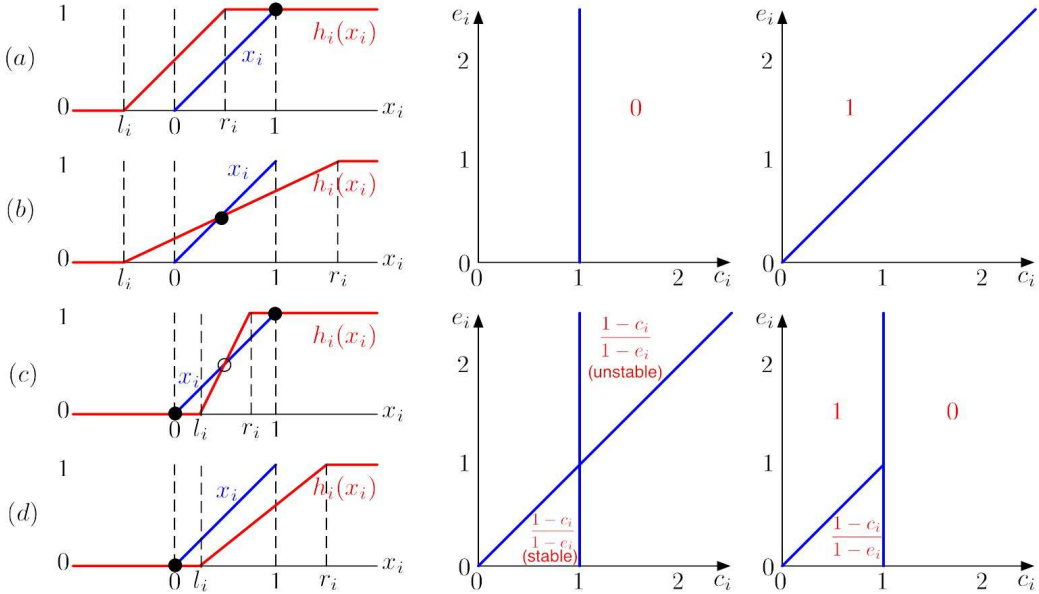}
\caption{Illustration of Prop.\ \ref{prop:conaffsepoffequ}.
  Equilibria adoption levels for continuous affinities with separate
  service offerings.  Equilibria $x_i^* \in [0,1]$ solve $h_i(x_i) =
  x_i$, where $h_i(x_i)$ has thresholds $l_i,r_i$.  The left figure
  shows the four orderings $(a)$ $l_i < 0 < r_i < 1$, $(b)$ $l_i < 0 <
  1 < r_i$, $(c)$ $0 < l_i < r_i < 1$, and $(d)$ $0 < l_i$ and $1 <
  r_i$, with black (open) dots indicating stable (unstable)
  equilibria.  The possible equilibria are $0,1,(1-c_i)/(1-e_i)$.  The
  first three subfigures of the right figure show the $(c_i,e_i)$
  plane and the regions for which each of the three equilibria are
  found.  The final subfigure on the bottom right shows a partition of
  the $(c_i,e_i)$ plane in terms of the lowest possible stable
  equilibria.  } 
\label{fig:conaffsep1}
\end{figure}
\begin{figure}[!h]
\centering
\includegraphics[width=\textwidth]{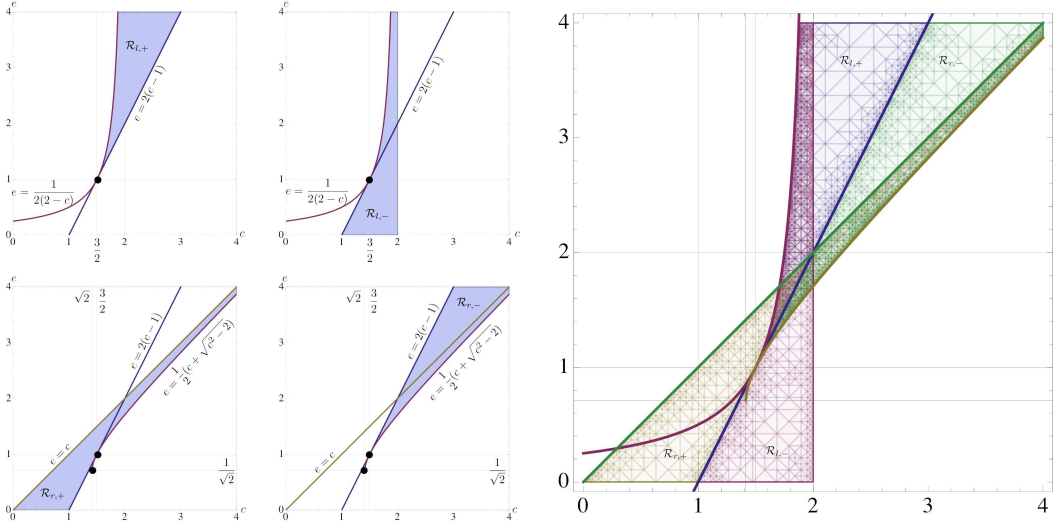}
\caption{The four equilibria regions $\Rmc_{l,\pm},\Rmc_{r,\pm}$ of
  the $(c,e)$ plane in \Eqref{eq:conaffindeqreg1} for continuous and
  independent affinities $(U_1,U_2)$.} 
\label{fig:conaffindeqregcomb1}
\end{figure}
\subsection{Bundle adoption equilibria under continuous users affinity
distributions}
 
\begin{proposition}
\label{prop:conaffbunprobadopt}
The probability of bundle adoption $h(x)$ in \Eqref{eq:BunCCDF} at adoption level $x$ for aggregate continuous affinity $U$ from Prop.\ \ref{prop:conaffsumpdfcdf} is
\begin{equation}
h(x) = \left\{ \begin{array}{lrcccl}
0, \; & & & x & \leq & l \\
2 - (c-ex) - \displaystyle\int_{c-ex-1}^1 F_Z \left( \frac{F_Z^{-1}(c-ex-v) - \rho F_Z^{-1}(v)}{\sqrt{1-\rho^2}} \right) \drm v , \; & l & < & x & \leq & m \\
1 - \displaystyle\int_0^{c-ex} F_Z \left( \frac{F_Z^{-1}(c-ex-v) - \rho F_Z^{-1}(v)}{\sqrt{1-\rho^2}} \right) \drm v, \; & m & < & x & \leq & r \\
1, \; & r & < & x & & 
\end{array} \right., 
\label{eq:conaffbunprobadopt}
\end{equation}
for adoption thresholds $l \equiv \frac{c-2}{e}$, $m \equiv \frac{c-1}{e}$, and $r \equiv \frac{c}{e}$.  The function $h(x)$ has the following properties:
\begin{enumerate}
\item $h'(x) = e f_U(c-ex) \geq 0$
\item $h^{''}(x) = -e^2 f_U'(c-ex)$
\item $h(m) = \frac{1}{2}$
\end{enumerate}
Stable equilibria include $x^* \in \{0,1\}$, where $x^* = 0$ is an equilibrium provided $h(0) = 0 \Leftrightarrow c > 2$, and $x^* = 1$ is an equilibrium provided $h(1) = 1 \Leftrightarrow c < e$.  
\end{proposition}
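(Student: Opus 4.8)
The plan is to reduce the entire statement to the single identity $h(x) = \Pbb(U > c - ex) = 1 - F_U(c-ex)$ from \Eqref{eq:BunCCDF}, and then read off the piecewise structure by substituting the aggregate-affinity CDF $F_U$ computed in Prop.\ \ref{prop:conaffsumpdfcdf}. The key observation is that the affine map $u = c - ex$ is decreasing in $x$, so the three thresholds $l = (c-2)/e$, $m = (c-1)/e$, and $r = c/e$ correspond exactly to $u = 2,1,0$. Hence $x \le l \Leftrightarrow u \ge 2$, the interval $l < x \le m$ maps to $1 \le u < 2$, the interval $m < x \le r$ maps to $0 \le u < 1$, and $x > r$ maps to $u < 0$.

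First I would dispatch the two trivial branches: since $U$ is supported on $[0,2]$, for $x \le l$ we have $c-ex \ge 2$ and $\Pbb(U > c-ex)=0$, giving $h(x)=0$, while for $x > r$ we have $c-ex < 0$ and $\Pbb(U > c-ex)=1$, giving $h(x)=1$. For $l < x \le m$ (so $u \in [1,2)$) I substitute the second line of $F_U$ in \Eqref{eq:conaffsumcdf}, namely $F_U(u) = \int_{u-1}^1 \psi_{u,\rho}(v)\,\drm v + u - 1$, into $h = 1 - F_U(u)$ with $u = c-ex$; this yields $h(x) = 2 - (c-ex) - \int_{c-ex-1}^1 \psi_{c-ex,\rho}(v)\,\drm v$, which matches the stated form once $\psi_{u,\rho}$ is expanded via \Eqref{eq:conaffbuncdfarg}. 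For $m < x \le r$ (so $u \in [0,1)$) I substitute the first line $F_U(u) = \int_0^u \psi_{u,\rho}(v)\,\drm v$ to obtain $h(x) = 1 - \int_0^{c-ex}\psi_{c-ex,\rho}(v)\,\drm v$, again as claimed.

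The three listed properties then follow directly from $h(x) = 1 - F_U(c-ex)$, with no need to differentiate the explicit integrals. The chain rule gives $h'(x) = e\,f_U(c-ex)$, which is nonnegative because $f_U$ is a density and $e \ge 0$; differentiating once more gives $h''(x) = -e^2 f_U'(c-ex)$. The identity $h(m)=1/2$ is immediate from $c - em = 1$ together with the fact, established in Prop.\ \ref{prop:conaffsumpdfcdf}, that $F_U(1) = 1/2$ for every $\rho$. For the equilibria I would evaluate the endpoints against the fixed-point condition $h(x^*) = x^*$: at $x^*=0$ this reads $h(0) = 1 - F_U(c) = 0$, i.e.\ $F_U(c)=1$, which holds precisely when $c$ reaches the right end of the support, giving $c > 2$; at $x^*=1$ it reads $h(1) = 1 - F_U(c-e) = 1$, i.e.\ $F_U(c-e)=0$, which holds precisely when $c-e \le 0$, giving $c < e$.

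There is no serious obstacle here; the result is essentially a restatement of Prop.\ \ref{prop:conaffsumpdfcdf} under the change of variable $u = c-ex$. The only point demanding genuine care is the bookkeeping of the branch boundaries: because the substitution reverses orientation, I must pair $u \in [1,2)$ with the \emph{second} (additive) branch of $F_U$ and $u \in [0,1)$ with the \emph{first} branch, and keep the integration limits $c-ex-1$ and $c-ex$ consistent with those of \Eqref{eq:conaffsumcdf}. A secondary subtlety is the strict-versus-weak inequality at the boundary values $c=2$ and $c=e$, where the continuity of $F_U$ makes the endpoints measure-zero edge cases; I would simply note that these do not affect which equilibrium is selected.
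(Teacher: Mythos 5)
Your proposal is correct and follows essentially the same route as the paper's (much terser) proof: read off $h(x)=1-F_U(c-ex)$, substitute the piecewise CDF from Prop.~\ref{prop:conaffsumpdfcdf} with the orientation-reversing change of variable $u=c-ex$, and obtain the three properties by the chain rule and $F_U(1)=1/2$. Your handling of the branch bookkeeping and the measure-zero boundary cases $c=2$, $c=e$ supplies detail the paper omits but introduces nothing substantively different.
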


\begin{proof}
\Eqref{eq:conaffbunprobadopt} is immediate from the definition $h(x) = \Pbb(U > c - ex)$ in \Eqref{eq:BunCCDF} and Prop.\ \ref{prop:conaffsumpdfcdf}.  The first two properties are immediate from the definition of $h(x)$ and the fact that the CDF of $U$ is differentiable, by assumption.  The property $h(m) = 1/2$ follows immediately from $F_U(1) = 1/2$ in Prop.\ \ref{prop:conaffsumpdfcdf}.  
\end{proof}

As stated in Corollary~\ref{cor:conaffindprobadopt}, bundle adoption
equilibria satisfy $h(x)=x$ with solutions given by \Eqref{eq:conaffindeq1}.  

The regions in \Eqref{eq:conaffindeqreg1} are illustrated in
\fig{fig:conaffindeqregcomb1}, which illustrates their shape.
Observe $i)$ $e = 2(c-1)$ is the solution of $\xi_{l,\pm}^* = m$ and
$\xi_{r,\pm}^* = m$, $ii)$ $e = 1/(2(2-c))$ is the solution of
$2(c-2)e+1=0$ where $2(c-2)e+1$ is the discriminant of $\xi_{l,\pm}^*$
in \Eqref{eq:conaffindeq1}, and $iii)$ $e = \frac{1}{2}(c +
\sqrt{c^2-2})$ is the solution of $2(e-c)e+1=0$ where $2(e-c)e+1$ is
the discriminant of $\xi_{r,\pm}^*$ in \Eqref{eq:conaffindeq1}.

\subsection{Separate adoption equilibria under discrete user affinities}

\fig{fig:disaffsep} illustrates possible adoption equilibria
under discrete user affinities, and the regions of the $(c_i,e_i)$
plane they correspond to.
\begin{figure}[!h]
\centering
\includegraphics[width=\textwidth]{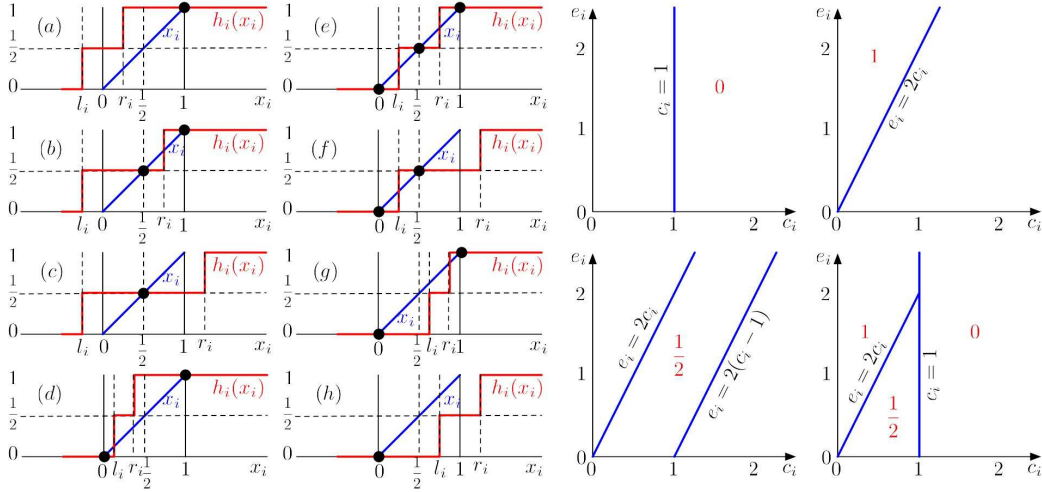}
\caption{
Illustration of Prop.\ \ref{pro:disaffsep}.  {\bf Left:} the eight
possible orderings of $\{l_i,r_i\}$ with $\{0,1/2,1\}$ each determine
the subset of $\{0,1/2,1\}$ that are equilibria. All equilibria are
stable.  {\bf Right:} the $(c_i,e_i)$ plane and the equilibria in each
region.  {\bf Bottom right:} partition of the $(c_i,e_i)$ plane
according to lowest stable equilibria.}   
\label{fig:disaffsep}
\end{figure}

\subsection{Bundle adoption equilibria under discrete user affinities} 

\fig{fig:disaffbun} parallels \fig{fig:disaffsep}, and
illustrates Prop.~\ref{pro:disaffbun}.  
Of interest is comparing the bottom-right plots of
\fig{fig:disaffsep} and \fig{fig:disaffbun}, to
identify the regions where bundling yields a higher adoption
equilibrium\footnote{Note though that the bottom-left plot of
\fig{fig:disaffbun} is for the specific value of $\rho=0$.}.
Different regions, and therefore outcomes arise as $\rho$ varies.
\begin{figure}[!h]
\centering
\includegraphics[width=\textwidth]{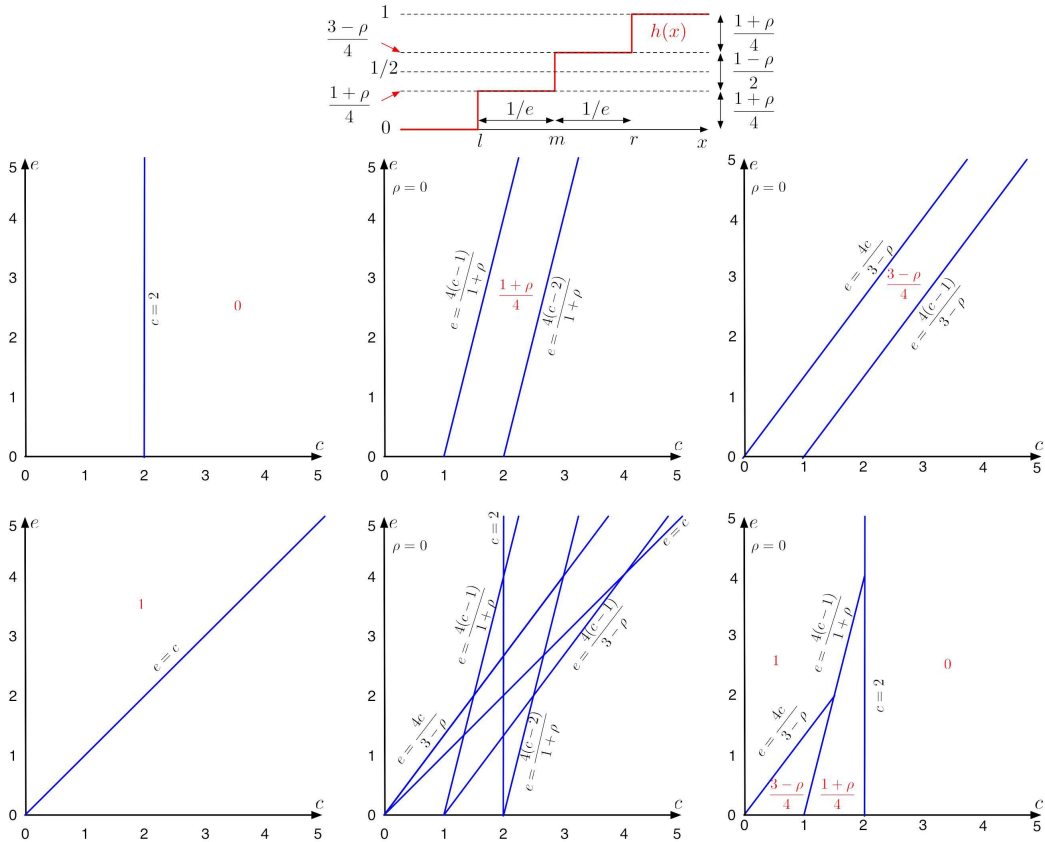}
\caption{
Illustration of Prop.\ \ref{pro:disaffbun}.  {\bf Top:} the
probability of adoption $h(x)$ in \Eqref{eq:disaffbunprobadopt} in
terms of $p$ (left) and $\rho$ (right).  {\bf Bottom:} the regions of
the $(c,e)$ plane for each of the four equilibria
$\{0,(1+\rho)/4,(3-\rho)/4,1\}$.  Also shown are the superimposed
boundaries of the four equilibria regions, as well as the partition of
the $(c,e)$ plane according to the lowest stable equilibria.  The
figures are shown for $\rho=0$.} 
\label{fig:disaffbun}
\end{figure}

\end{document}